\theoremstyle{definition}
\newlength\min@xx
\newtheorem{theorem}{Theorem}
\newtheorem{lemma}{Lemma}
\newtheorem{definition}{Definition}
\newtheorem{proposition}{Proposition}
\DeclareMathOperator{\Tr}{Tr}
\newcommand{\hexagon}[3]{
\begin{scope}[xshift=#2 *1cm -0.5cm *#3, yshift=#3*0.866cm]
\foreach \t in {30, 90, 150, 210, 270, 330}
{
\draw[] (\t:#1*0.577) -- (\t+60:#1*0.577);
}
\end{scope}
}
\newcommand{\hexagonunitcolor}[4]{
\begin{scope}[xshift=#2 *1cm -0.5cm *#3, yshift=#3*0.866cm]
\foreach \t in {30, 90, 150, 210, 270, 330}
{
\draw[thick] (\t:#1*0.577) -- (\t+60:#1*0.577);
}
\end{scope}
\begin{scope}[xshift=#2 *1cm -0.5cm *#3, yshift=#3*0.866cm]
\draw[fill=#4, opacity=0.2] (30:#1*0.577) -- (90:#1*0.577) -- (150:#1*0.577) -- (210:#1*0.577) -- (270:#1*0.577) -- (330:#1*0.577) -- cycle;
\end{scope}
}
\definecolor{BSorange}{RGB}{140,50,0}
\begin{document}

\title{Strict area law implies commuting parent Hamiltonian}
\author[1]{Isaac H.~Kim}
\affil[1]{Department of Computer Science, University of California, Davis, CA 95616, USA}
\author[2]{Ting-Chun Lin}
\affil[2]{Department of Physics, University of California San Diego, La Jolla, CA 92093, USA}
\author[3]{Daniel Ranard}
\affil[3]{Center for Theoretical Physics, Massachusetts Institute of Technology, Cambridge, MA 02139}
\author[2]{Bowen Shi}

\date{}
\maketitle

\begin{abstract}
We show that in two spatial dimensions, when a quantum state has entanglement entropy obeying a strict area law, meaning $S(A)=\alpha |\partial A| - \gamma$ for constants $\alpha, \gamma$ independent of lattice region 
$A$, then it admits a commuting parent Hamiltonian.  More generally, we prove that the entanglement bootstrap axioms in 2D imply the existence of a commuting, local parent Hamiltonian with a stable spectral gap. We also extend our proof to states that describe gapped domain walls. Physically, these results imply that the states studied in the entanglement bootstrap program correspond to ground states of some local Hamiltonian, describing a stable phase of matter. Our result also suggests that systems with chiral gapless edge modes cannot obey a strict area law provided they have finite local Hilbert space.
\end{abstract}

\section{Introduction}
\label{sec:intro}

A fundamental question in the study of quantum many-body systems is the classification of phases. Formally, one can define gapped phases of matter by defining an equivalence relation in the space of Hamiltonians. If two ground states are adiabatically connected to each other, the two systems are said to be in the same phase. 

The classification of gapped quantum many-body systems has been completed in one spatial dimensions~\cite{Chen2011,Schuch2011,Fidkowski2011}. However, in higher dimensions, much less is known. In two spatial dimensions, a common expectation is that each gapped phase of matter is uniquely labeled by the combination of some unitary modular tensor category and a rational number known as the chiral central charge~\cite{kitaev2006anyons}. Though this expectation seems plausible, establishing this fact from a microscopic point of view has been challenging. In three dimensions and higher, there is not even an exhaustive proposed classification, at least if one attempts to accommodate exotic models such as fractons~\cite{Haah2011}.

Recently, a new paradigm to tackle this problem was developed: the entanglement bootstrap~\cite{shi2020fusion}. This is an approach to derive the universal properties of gapped quantum phases from ground state entanglement, and ideally to classify them. In entanglement bootstrap, one begins with a set of reasonable axioms about the ground state entanglement, which are believed to be approximately satisfied for gapped ground states. One sufficient axiom is to assume a ``strict area law'' on the entanglement entropy $S(A)$ of region $A$, \cite{Kitaev2006,Levin2006}
\begin{align} \label{eq:strict-area}
S(A)=\alpha |\partial A| - \gamma
\end{align}
for constants $\alpha, \gamma$ independent of $A$.  While this assumption is particularly compact, weaker axioms involving linear combinations of entropies are sufficient and perhaps more natural, described in Section \ref{subsec:summary-of-results}.

While the entanglement bootstrap axioms are not \textit{precisely} satisfied in most natural systems, they provide a simple framework in which otherwise difficult ideas can be explored rigorously. Early developments of the entanglement bootstrap program focused on proving familiar ideas about gapped systems in two spatial dimensions, such as the anyon fusion rules~\cite{shi2020fusion}. More recent progress in this direction led to novel superselection sectors and their fusion rules in gapped domain walls~\cite{Shi2021}, exotic topological charges in three spatial dimensions~\cite{Huang2023,Shi2023}, as well as an expression for chiral central charge~\cite{Kim2022,Kim2022a}. There is even a possible extension of this approach to critical systems ~\cite{Lin2023}.

While these past works suggest strongly that the entanglement bootstrap approach can lead to a novel predictions about the universal properties of gapped ground states, there is a potential concern. While the axioms of the entanglement bootstrap are known to hold for a large class of gapped ground states, one may wonder if the converse is true. Namely, given some state that satisfies the axioms, one might ask if this state can be viewed as a ground state of some local Hamiltonian. If such a parent Hamiltonian does not exist, not all the findings made in the entanglement bootstrap program may be physically relevant, because then the universal properties revealed by the purported ``ground state'' might not belong to any local Hamiltonian.

To that end, we prove rigorously that any state that satisfies the axioms of the entanglement bootstrap exactly has a gapped parent Hamiltonian with a stable spectral gap. More specifically, we show that the axioms in Ref.~\cite{shi2020fusion} --- the ones that are applicable to the ``bulk'' of gapped ground states in 2D without any domain wall --- imply that there is a parent Hamiltonian with a spectral gap. In fact, this parent Hamiltonian is a sum of local terms each of which are commuting projectors. Moreover, we also show that this Hamiltonian satisfies a set of conditions that guarantees the stability of the gap~\cite{Michalakis2013}. Therefore, any state that satisfies the axioms of the entanglement bootstrap can be viewed as a ground state of some local Hamiltonian, which is a representative of a stable phase of matter that is robust against sufficiently weak (but nonvanishing) perturbation.

The same argument also works in the presence of gapped domain walls~\cite{Shi2021}. This is a setup in which some of the axioms are relaxed along a codimension-$1$ defect, known as a \emph{gapped domain wall}. The axioms for domain walls in Ref.~\cite{Shi2021} were motivated by considering entanglement properties (such as the strict area law), independent of other inputs such as topological quantum field theory. Therefore, the existence of the gapped parent Hamiltonian in this context was not a priori obvious.

The main insight behind our work is an observation about the property of a quantum Markov chain~\cite{Petz1988}, which may be of independent interest. A tripartite quantum state $\rho_{ABC}$ is said to be a quantum Markov chain if it satisfies $S(\rho_{AB}) + S(\rho_{BC}) = S(\rho_B) + S(\rho_{ABC})$, where $S(\rho)=-\text{Tr}(\rho \log \rho)$ is the von Neumann entropy of a density matrix $\rho$. We prove that the projector onto the support of $\rho_{AB}$ and $\rho_{BC}$ must necessarily commute. This fact, together with the known properties of quantum Markov chains used in the entanglement bootstrap program~\cite{shi2020fusion}, lets us construct a local parent Hamiltonian consisting of commuting projectors.

Our result has two immediate physical implications. First, it implies that the states that satisfy the axioms of the entanglement bootstrap ought to be a representative ground state wavefunction of some stable phase of matter. This is because there is a local parent Hamiltonian with a stable spectral gap which has this wavefunction as a ground state. Second, our result implies that there is an obstruction to satisfying the axioms of the entanglement bootstrap \emph{exactly} for systems with a nonzero chiral central charge. It is well-known that such systems must host a nonzero energy current along the edge at finite temperature~\cite{kitaev2006anyons}. If the axioms are satisfied exactly, by our result there must exist a local commuting parent Hamiltonian. For such Hamiltonian, any energy current is identically zero at all temperature, a direct contradiction with the fact that nonzero chiral central charge implies a nonzero energy current at finite temperature.

The rest of the paper is structured as follows. In Section~\ref{sec:summary} we discuss the setup and provide an executive summary of the main results. In Section~\ref{sec:preliminary} we discuss well-known facts that are relevant to this paper. In Section~\ref{sec:extensions_of_axioms}, we review the argument for extending the axioms of the entanglement bootstrap~\cite{shi2020fusion}, focusing on the setup we advocated in Section~\ref{sec:summary}. In Section~\ref{sec:ltqo}, we  introduce a class of parent Hamiltonians and show that they all satisfy a property known as the local topological quantum order condition~\cite{Michalakis2013}. In Section~\ref{sec:commuting_projectors}, we prove the aforementioned property of quantum Markov chain: that projectors onto the support over the reduced density matrices commute with each other. In Section~\ref{sec:parent_hamiltonian_final}, we construct the parent Hamiltonian that has a stable spectral gap. In Section~\ref{sec:gapped_domain_wall}, we extend the construction to the case in which a gapped domain wall is present. We note that the Hamiltonians constructed in these sections are quite elaborate, consisting of local terms acting on more than hundred sites. In Section~\ref{sec:weight_reduction}, we introduce a technique to reduce this number to three, though we no longer have a guarantee that the local terms commute with each other. We conclude with a discussion and open problems in Section~\ref{sec:discussion}.

\section{Results and strategy}
\label{sec:summary}

\subsection{Informal overview}
\label{sec:informal_overview}

We begin with an informal sketch.  Suppose a 2D state $\rho$ satisfies 
\begin{align} \label{eq:zero_CMI}
    I(A:C|B)_{\rho}=0 
\end{align}
for all regions $ABC$ topologically equivalent to those of [Fig.~\ref{fig:chain_like_ex}]. Here, $I(A:C|B):= S(AB)+S(BC)-S(B)-S(ABC)$ denotes the conditional mutual information. Note this property is true for states with strict area law, as in Eq.~\eqref{eq:strict-area}, or more generally for states that satisfy the entanglement bootstrap axioms [Section \ref{subsec:summary-of-results}]. Our main result is the construction of a commuting parent Hamiltonian $H$ for which $\rho$ is a gapped and frustration-free ground state, with a local uniqueness property described below.

\begin{figure}[h]
    \centering
    \begin{tikzpicture}[scale=0.8]
        \draw[] (0,0) -- ++ (1,0) -- ++ (0, 1) -- ++ (-1, 0) -- cycle;
        \draw[] (1,0) -- ++ (1,0) -- ++ (0, 1) -- ++ (-1, 0) -- cycle;
        \draw[] (2,0) -- ++ (1,0) -- ++ (0, 1) -- ++ (-1, 0) -- cycle;
        \node[] () at (0.5, 0.5) {$A$};
        \node[] () at (1.5, 0.5) {$B$};
        \node[] () at (2.5, 0.5) {$C$};
    \end{tikzpicture}
    \caption{Sub-regions $A,B$ and $C$ in the 2D plane arranged like a chain.}
    \label{fig:chain_like_ex}
\end{figure}
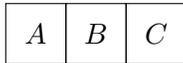

The construction is easily summarized as follows. For any region $X$, let 
\begin{align}
P_X=\text{proj}_{\text{supp}(\rho_X)}
\end{align} 
be the projector onto the support of $\rho_X$. For states $\rho_{ABC}$ on a finite-dimensional Hilbert space satisfying $I(A:C|B)_{\rho}=0$, it turns out [Proposition~\ref{proposition:commutation_markov}]
\begin{align} \label{eq:P-ABC-commute}
    [P_{AB},P_{BC}] = 0.
\end{align}
We will construct commuting parent Hamiltonian for $\rho$ by taking
\begin{align}
    H = \sum_{X\in \mathcal{S}} (1-P_X)
\end{align}
where the sum is taken over some set $\mathcal{S}$ of specially chosen $O(1)$-size regions $X$. We guarantee the terms $P_X$ commute by ensuring that whenever a pair of overlapping regions $X_1, X_2 \in \mathcal{S}$ appear in the sum, they are topologically equivalent to the pair of regions $AB, BC$ [Fig.~\ref{fig:chain_like_ex}], so that Eq.~\eqref{eq:P-ABC-commute} guarantees $[P_{X_1},P_{X_2}]=0$. 

We also want $H$ to have a unique ground state in some sense -- for instance, taking $H=0$ gives a ``parent Hamiltonian,'' but it is not useful. Still, we do not want to require $H$ has unique ground state: models such as the toric code have degeneracies depending on the global topology of the manifold.  A natural condition is that the ground state is ``locally unique,'' in the sense that all zero-energy states on a disk look identical on the interior of the disk.  One codification of this is the condition of \emph{local topological quantum order} (LTQO)~\cite{Michalakis2013}.
To guarantee LTQO,
we require the regions in $\mathcal{S}$ form a sort of  
``open cover'' of the plane, i.e., their interiors must cover the plane.  The LTQO property then follows from basic entanglement bootstrap techniques.

To find a set of regions satisfying these geometric requirements, we make use of a cellular decomposition of the plane. In particular, referring to Fig.~\ref{fig:coarse_grained_v2}, each region $X$ is given by the union of a single green region with its neighboring blue and red regions.  One can check such regions $X$ satisfy the above conditions.

Perhaps one can see that the basic ideas are topological and not too intricate.  However, some care is required in the choice of regions $X$.  Much of the technical detail is devoted to making a rigorous argument while assuming the entanglement bootstrap axioms only for a small number of explicitly chosen, $O(1)$-size lattice regions and their translations.

\subsection{Summary of results} \label{subsec:summary-of-results}

Throughout this paper, we shall consider a two-dimensional (2D) plane, covered by a set of faces, each being a closed hexagon-shaped disk [Fig.~\ref{fig:threecolor}]. We assign a finite-dimensional Hilbert space to each face, and we informally view the global Hilbert space as a tensor product of the local Hilbert spaces. We choose a quantum state in the global Hilbert space, which we call the \emph{reference state} and denoted as $\sigma$ throughout the paper.\footnote{We treat the infinite lattice informally. More formally, one could adopt the algebraic framework \cite{naaijkens2013quantum}, ultimately viewing the reference state as a positive linear functional on the algebra of quasi-local observables.  Rather than assuming this ``global state'' as an input, one could first view the reference state as a collection of consistent density matrices on $O(1)$-size regions covering the plane, satisfying the entanglement bootstrap axioms locally.  Then the techniques we discuss allow one to define a consistent ``global state,'' e.g.~a state on the algebra of quasi-local observables, consistent with the local data.}

\begin{figure}[!h]
    \centering
    \begin{tikzpicture}[scale=0.6]
    \foreach \x in {0,...,8}
    {
    \foreach \y in {0,...,8}
    {
    \hexagon{1}{\x}{\y};
    }
    }
    \foreach \x in {0,1,2}
    {
    \foreach \y in {0,1,2}
    {
    \hexagonunitcolor{1}{3*\x}{\y*3}{white};
    \begin{scope}[xshift=1cm]
    \hexagonunitcolor{1}{3*\x}{\y*3}{white};
    \end{scope}
    \begin{scope}[xshift=2cm]
    \hexagonunitcolor{1}{3*\x}{\y*3}{white};
    \end{scope}
    }
    }

    \foreach \x in {0,1,2}
    {
    \foreach \y in {0,1,2}
    {
    \hexagonunitcolor{1}{3*\x}{(\y+0.333333)*3}{white};
    \begin{scope}[xshift=1cm]
    \hexagonunitcolor{1}{3*\x}{(\y+0.333333)*3}{white};
    \end{scope}
    \begin{scope}[xshift=2cm]
    \hexagonunitcolor{1}{3*\x}{(\y+0.333333)*3}{white};
    \end{scope}
    }
    }

    \foreach \x in {0,1,2}
    {
    \foreach \y in {0,1,2}
    {
    \hexagonunitcolor{1}{3*\x}{(\y+0.666666)*3}{white};
    \begin{scope}[xshift=1cm]
    \hexagonunitcolor{1}{3*\x}{(\y+0.666666)*3}{white};
    \end{scope}
    \begin{scope}[xshift=2cm]
    \hexagonunitcolor{1}{3*\x}{(\y+0.666666)*3}{white};
    \end{scope}
    }
    }
    \draw[dotted, line width=2pt] (7, 0.866cm*4) -- ++ (0.65cm, 0);
    \draw[dotted, line width=2pt] (-3, 0.866cm*4) -- ++ (-0.65cm, 0);
    \draw[dotted, line width=2pt] (4, -0.866cm) -- ++ (0.25cm*1.25, -0.433cm*1.25);
    \draw[dotted, line width=2pt] (0, 0.866cm*9) -- ++ (-0.25cm*1.25, 0.433cm*1.25);
    \end{tikzpicture}
    \caption{In this paper, we consider a partition of $\mathbb{R}^2$ into hexagonal cells, each representing a finite-dimensional quantum system. }
    \label{fig:threecolor}
\end{figure}
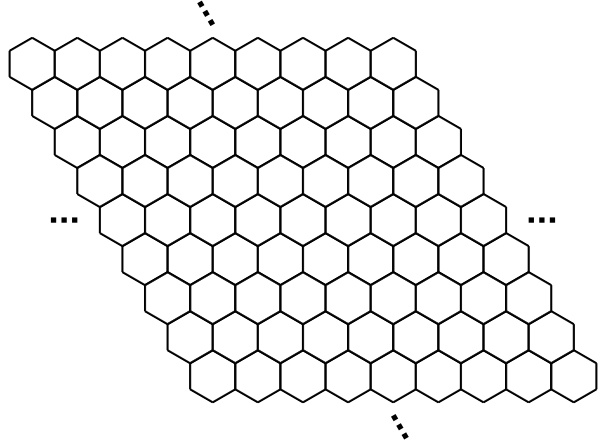

Denoting the set of faces as $\Lambda$, we define several notions that shall be useful. For a set $A\subset \Lambda$, we denote the \emph{neighborhood} of $A$ as $N(A)$, which is a set of faces in $\Lambda \setminus A$ that are adjacent to at least one of the faces in $A$. We shall sometimes focus on the boundary of a single face $f\in \Lambda$. In those cases, we shall abuse the notation and denote the neighborhood as $N(f)$, instead of $N(\{f \})$. We say that a set $A\subset \Lambda$ is \emph{connected} if the union of the faces, viewed as a subset of $\mathbb{R}^2$, is connected. Similarly, $A\subset \Lambda$ is \emph{simply connected} if the union of the faces is, viewed as a subset of $\mathbb{R}^2$, is simply connected. Examples are shown in Fig.~\ref{fig:disk-like}. Sometimes we shall refer to a simply connected set as a \emph{disk}. Lastly, if a disk is $f \cup N(f)$ for some $f\in \Lambda$, we refer to this disk as an \emph{elementary disk}. We emphasize that elementary disk is not necessarily the smallest disk. Nevertheless, we advocate this convention due to the repeated appearance of elementary disk in this paper.

\begin{figure}[h]
    \centering
    \begin{tikzpicture}[scale=0.6]
        \hexagonunitcolor{1}{0}{0}{green};
        \hexagonunitcolor{1}{1}{0}{green};
        \hexagonunitcolor{1}{1}{1}{green};
        \hexagonunitcolor{1}{0}{1}{green};
        \hexagonunitcolor{1}{2}{1}{green};
        \hexagonunitcolor{1}{1}{2}{green};
        \hexagonunitcolor{1}{2}{2}{green};

        \begin{scope}[xshift=15cm]
        \hexagonunitcolor{1}{0}{0}{blue};
        \hexagonunitcolor{1}{1}{0}{blue};
        \hexagonunitcolor{1}{0}{1}{blue};
        \hexagonunitcolor{1}{2}{1}{blue};
        \hexagonunitcolor{1}{1}{2}{blue};
        \hexagonunitcolor{1}{2}{2}{blue};
        \end{scope}
    \end{tikzpicture}
    \caption{(Left) The faces colored in green is an example of a simply connected subsystem. (Right) The faces colored in blue is an example of a connected subsystem which is not simply connected.}
    \label{fig:disk-like}
\end{figure}
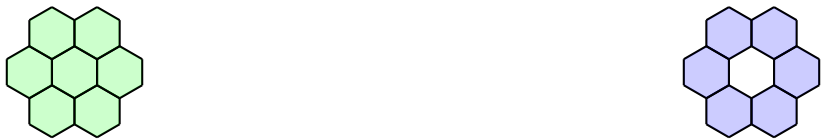

Our first main result concerns a reference state that satisfies the axioms \textbf{A0} and \textbf{A1} in the vicinity of a face. More formally, consider a disk $A\subset \Lambda$. 
\begin{definition}
    The state $\sigma$ satisfies \textbf{A0} in the vicinity of $A\subset V$ if 
    \begin{equation}
        \left(S(A) + S(A\cup N(A)) - S(N(A))\right)_{\sigma}=0.
    \end{equation}
\end{definition}
\begin{definition}
    The state $\sigma$ satisfies \textbf{A1} in the vicinity of $A\subset V$ if for any partition of $N(A)$ into disjoint simply connected subsets $N(A)_1$ and $N(A)_2$, 
    \begin{equation}
        \left(S(A\cup N(A)_1) - S(N(A)_1) + S(A\cup N(A)_2) - S(N(A)_2)\right)_{\sigma} = 0.
    \end{equation}
\end{definition}
\noindent 
We say that a reference state is \emph{valid} if for every face $f\in \Lambda$, \textbf{A0} and \textbf{A1} in the vicinity of $f$ is satisfied. To be more precise, we demand
\begin{equation}
\left(S(C) + S(C|B) \right)_{\sigma}=0 \quad \text{ for } \quad
    \begin{tikzpicture}[scale=0.45, baseline={([yshift=-.55ex]current bounding box.center)}]
        \hexagonunitcolor{1}{1}{1}{red};
        \hexagonunitcolor{1}{0}{0}{blue};
        \hexagonunitcolor{1}{1}{0}{blue};
        \hexagonunitcolor{1}{0}{1}{blue};
        \hexagonunitcolor{1}{2}{1}{blue};
        \hexagonunitcolor{1}{1}{2}{blue};
        \hexagonunitcolor{1}{2}{2}{blue};
        \node[] () at (0.5, 0.866) {$C$};
        \node[] () at (-0.5, 0.866) {$B$};
    \end{tikzpicture}
    \label{eq:a0_starting_point}
\end{equation}
where $S(X|Y) := S_{XY} - S_Y$ and
\begin{equation}
    \left(S(C|B) + S(C|D) \right)_{\sigma}=0 \quad \text{ for }\quad 
    \begin{tikzpicture}[scale=0.45, baseline={([yshift=-.55ex]current bounding box.center)}]
        \hexagonunitcolor{1}{1}{1}{red};
        \hexagonunitcolor{1}{0}{0}{green};
        \hexagonunitcolor{1}{1}{0}{green};
        \hexagonunitcolor{1}{0}{1}{blue};
        \hexagonunitcolor{1}{2}{1}{green};
        \hexagonunitcolor{1}{1}{2}{green};
        \hexagonunitcolor{1}{2}{2}{green};
        \node[] () at (0.5, 0.866) {$C$};
        \node[] () at (-0.5, 0.866) {$B$};
        \node[] () at (1.5, 0.866) {$D$};
    \end{tikzpicture},
    \begin{tikzpicture}[scale=0.45, baseline={([yshift=-.55ex]current bounding box.center)}]
        \hexagonunitcolor{1}{1}{1}{red};
        \hexagonunitcolor{1}{0}{0}{blue};
        \hexagonunitcolor{1}{1}{0}{green};
        \hexagonunitcolor{1}{0}{1}{blue};
        \hexagonunitcolor{1}{2}{1}{green};
        \hexagonunitcolor{1}{1}{2}{green};
        \hexagonunitcolor{1}{2}{2}{green};
        \node[] () at (0.5, 0.866) {$C$};
        \node[] () at (-0.5, 0.866) {$B$};
        \node[] () at (1.5, 0.866) {$D$};
    \end{tikzpicture},
    \begin{tikzpicture}[scale=0.45, baseline={([yshift=-.55ex]current bounding box.center)}]
        \hexagonunitcolor{1}{1}{1}{red};
        \hexagonunitcolor{1}{0}{0}{blue};
        \hexagonunitcolor{1}{1}{0}{blue};
        \hexagonunitcolor{1}{0}{1}{blue};
        \hexagonunitcolor{1}{2}{1}{green};
        \hexagonunitcolor{1}{1}{2}{green};
        \hexagonunitcolor{1}{2}{2}{green};
        \node[] () at (0.5, 0.866) {$C$};
        \node[] () at (-0.5, 0.866) {$B$};
        \node[] () at (1.5, 0.866) {$D$};
    \end{tikzpicture},
    \label{eq:a1_starting_point}
\end{equation}
as well as rotated choices of subsystems $B$ and $D$ (while keeping $C$ intact). 

Our first main result establishes, for the valid reference states, the existence of a local commuting projector parent Hamiltonian. These are Hamiltonians that can be expressed as a sum of local projectors that commute with each other. Here is our first main theorem.
\begin{theorem}
For any valid reference state, there exists a local commuting projector parent Hamiltonian satisfying the local topological quantum order condition~\cite[Definition 4]{Michalakis2013}.  
\label{theorem:main1}
\end{theorem}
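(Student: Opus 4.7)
The plan is to follow the informal overview of Section~\ref{sec:informal_overview}: construct $H$ as a sum $\sum_{X\in\mathcal{S}}(1-P_X)$ of projectors $P_X$ onto $\mathrm{supp}(\sigma_X)$, over a carefully chosen collection $\mathcal{S}$ of $O(1)$-size lattice regions, and then verify the two nontrivial properties, namely pairwise commutativity of the local terms and the local topological quantum order condition of~\cite{Michalakis2013}.

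I would begin by fixing the coarse-grained three-coloring of the plane (into red, blue, and green super-cells) shown in Figure~\ref{fig:coarse_grained_v2}. For each green super-cell $G$, I form the region $X_G$ obtained by taking $G$ together with its neighboring red and blue super-cells, and set $\mathcal{S}=\{X_G\}$. By construction $\sigma$ is a frustration-free zero-energy ground state of $H=\sum_{X\in\mathcal{S}}(1-P_X)$. The key geometric feature, forced by the spacing of the green cells, is that the overlap of any two distinct $X_{G_1}, X_{G_2}\in\mathcal{S}$ is topologically equivalent to the three-region chain of Figure~\ref{fig:chain_like_ex}: writing $X_{G_1}=AB$, $X_{G_2}=BC$ with $B=X_{G_1}\cap X_{G_2}$, $A=X_{G_1}\setminus X_{G_2}$, $C=X_{G_2}\setminus X_{G_1}$, the middle region $B$ is a simply connected strip that topologically separates $A$ from $C$.

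For pairwise commutativity, I invoke Proposition~\ref{proposition:commutation_markov}, which asserts $[P_{AB},P_{BC}]=0$ whenever $I(A:C|B)_{\sigma}=0$. The core task is thus to derive this zero conditional mutual information condition for every overlap pattern that arises. This will follow from the extension results in Section~\ref{sec:extensions_of_axioms}: starting from \textbf{A0} and \textbf{A1} holding only in the vicinity of individual faces, one iteratively merges overlapping elementary-disk neighborhoods to deduce the corresponding Markov property on the larger, chain-shaped region $X_{G_1}\cup X_{G_2}$. Disjoint pairs of terms commute trivially. For LTQO, I would verify that the regions in $\mathcal{S}$ form an ``open cover'' of the lattice in the sense that every face lies in the interior of some $X\in\mathcal{S}$. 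Then, for any disk $D$, any zero-energy state $\rho$ of $H$ satisfies $P_X\rho P_X=\rho$ for every $X\subset D$, and a standard entanglement bootstrap merging argument --- again driven by the zero conditional mutual information of $\sigma$ on the chain-shaped overlaps --- recursively reconstructs $\rho|_{\mathrm{int}(D)}$ from the local data of $\sigma$, yielding the desired uniqueness on the interior of $D$ independent of boundary conditions.

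The main obstacle I anticipate is the geometric bookkeeping in the commutativity step: one must check that for \emph{every} possible overlap pattern between two regions $X_{G_1}, X_{G_2}\in\mathcal{S}$, the resulting triple $(A,B,C)$ has precisely the topology for which $I(A:C|B)_{\sigma}=0$ is derivable from the on-face axioms via the extension machinery. In particular, the middle region $B$ must be a simply connected strip separating $A$ from $C$ in exactly the form the extension lemmas of Section~\ref{sec:extensions_of_axioms} demand. This is why $X_G$ must be defined as a single green cell together with its adjacent red and blue cells rather than by some more naive covering, and why much of the detailed work is devoted to a finite enumeration of overlap types rather than to any analytic estimate.
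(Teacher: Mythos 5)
Your proposal is correct and follows essentially the same route as the paper: the same choice of regions (a green $2$-cell together with its adjacent red and blue cells), commutativity via Proposition~\ref{proposition:commutation_markov} after verifying the Markov condition $I(X\setminus Y:Y\setminus X\,|\,X\cap Y)_{\sigma}=0$ for the finitely many overlap types using the extension machinery of Section~\ref{sec:extensions_of_axioms}, and LTQO via the covering condition ($D\cup N(D)\subset X$ for every elementary disk $D$) together with the Markov merging argument of Proposition~\ref{proposition:ltqo}. The only cosmetic difference is that the paper establishes the zero conditional mutual information for each overlap by bounding it with $S(C|B)+S(C|D)$ for an auxiliary region $D$ outside the two cells (Fig.~\ref{fig:cmi_upper_bounds}), rather than by literally exhibiting the intersection as a separating strip as in Fig.~\ref{fig:chain_like_ex}.
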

\noindent 
In Ref.~\cite{Michalakis2013}, the local topological quantum order condition, together with a condition they refer to as the local gap condition, guarantees the stability of the gap under any perturbation which has a finite but sufficiently small strength.\footnote{The condition we use is actually TQO-2 in Ref.~\cite{Bravyi2010}, though we use the local TQO condition for convenience.} It should be noted that the local gap condition is trivially satisfied by any local commuting projector models. Therefore, an immediate corollary of Theorem~\ref{theorem:main1} is the existence of a parent Hamiltonian for any valid reference state, which has a stable spectral gap. 

Our second main result establishes the existence of a local commuting projector parent Hamiltonian for a larger class of states. In Ref.~\cite{Shi2021}, certain axioms of the entanglement bootstrap were relaxed along a one-dimensional line. As an example, we can consider a lattice shown in Fig.~\ref{fig:gdw_setup}. The reference state, in this case, will continue to satisfy \textbf{A0} everywhere, but \textbf{A1} may be violated along the disks that touch the red line in Fig.~\ref{fig:gdw_setup}. More precisely, while \textbf{A1} continues to be satisfied for the following choice of $B, C,$ and $D$,
\begin{equation}
    \begin{tikzpicture}[scale=0.45, baseline={([yshift=-.55ex]current bounding box.center)}]
        \hexagonunitcolor{1}{1}{1}{red};
        \hexagonunitcolor{1}{0}{0}{green};
        \hexagonunitcolor{1}{1}{0}{green};
        \hexagonunitcolor{1}{0}{1}{blue};
        \hexagonunitcolor{1}{2}{1}{green};
        \hexagonunitcolor{1}{1}{2}{green};
        \hexagonunitcolor{1}{2}{2}{green};
        \begin{scope}[xshift=0 * 1cm]
        \draw[red, line width=2pt] (150:0.577) -- (90:0.577) -- (30:0.577);
        \end{scope}
        \begin{scope}[xshift=1 * 1cm]
        \draw[red, line width=2pt] (150:0.577) -- (90:0.577) -- (30:0.577);
        \end{scope}
        \node[] () at (0.5, 0.866) {$C$};
        \node[] () at (-0.5, 0.866) {$B$};
        \node[] () at (1.5, 0.866) {$D$};
    \end{tikzpicture},
    \begin{tikzpicture}[scale=0.45, baseline={([yshift=-.55ex]current bounding box.center)}]
        \hexagonunitcolor{1}{1}{1}{red};
        \hexagonunitcolor{1}{0}{0}{blue};
        \hexagonunitcolor{1}{1}{0}{green};
        \hexagonunitcolor{1}{0}{1}{blue};
        \hexagonunitcolor{1}{2}{1}{green};
        \hexagonunitcolor{1}{1}{2}{green};
        \hexagonunitcolor{1}{2}{2}{green};
        \begin{scope}[xshift=0 * 1cm]
        \draw[red, line width=2pt] (150:0.577) -- (90:0.577) -- (30:0.577);
        \end{scope}
        \begin{scope}[xshift=1 * 1cm]
        \draw[red, line width=2pt] (150:0.577) -- (90:0.577) -- (30:0.577);
        \end{scope}
        \node[] () at (0.5, 0.866) {$C$};
        \node[] () at (-0.5, 0.866) {$B$};
        \node[] () at (1.5, 0.866) {$D$};
    \end{tikzpicture},
    \begin{tikzpicture}[scale=0.45, baseline={([yshift=-.55ex]current bounding box.center)}]
        \hexagonunitcolor{1}{1}{1}{red};
        \hexagonunitcolor{1}{0}{0}{blue};
        \hexagonunitcolor{1}{1}{0}{blue};
        \hexagonunitcolor{1}{0}{1}{blue};
        \hexagonunitcolor{1}{2}{1}{green};
        \hexagonunitcolor{1}{1}{2}{green};
        \hexagonunitcolor{1}{2}{2}{green};
        \begin{scope}[xshift=0 * 1cm]
        \draw[red, line width=2pt] (150:0.577) -- (90:0.577) -- (30:0.577);
        \end{scope}
        \begin{scope}[xshift=1 * 1cm]
        \draw[red, line width=2pt] (150:0.577) -- (90:0.577) -- (30:0.577);
        \end{scope}
        \node[] () at (0.5, 0.866) {$C$};
        \node[] () at (-0.5, 0.866) {$B$};
        \node[] () at (1.5, 0.866) {$D$};
    \end{tikzpicture},
    \label{eq:a1_boundary_valid}
\end{equation}
it may not necessarily hold for the following choices:
\begin{equation}
    \begin{tikzpicture}[scale=0.45, baseline={([yshift=-.55ex]current bounding box.center)}]
        \hexagonunitcolor{1}{1}{1}{red};
        \hexagonunitcolor{1}{0}{0}{green};
        \hexagonunitcolor{1}{1}{0}{green};
        \hexagonunitcolor{1}{0}{1}{green};
        \hexagonunitcolor{1}{2}{1}{green};
        \hexagonunitcolor{1}{1}{2}{blue};
        \hexagonunitcolor{1}{2}{2}{green};
        \begin{scope}[xshift=0 * 1cm]
        \draw[red, line width=2pt] (150:0.577) -- (90:0.577) -- (30:0.577);
        \end{scope}
        \begin{scope}[xshift=1 * 1cm]
        \draw[red, line width=2pt] (150:0.577) -- (90:0.577) -- (30:0.577);
        \end{scope}
        \node[] () at (0.5, 0.866) {$C$};
        \node[] () at (0, 1.732) {$B$};
        \node[] () at (1.5, 0.866) {$D$};
    \end{tikzpicture},
    \begin{tikzpicture}[scale=0.45, baseline={([yshift=-.55ex]current bounding box.center)}]
        \hexagonunitcolor{1}{1}{1}{red};
        \hexagonunitcolor{1}{0}{0}{green};
        \hexagonunitcolor{1}{1}{0}{green};
        \hexagonunitcolor{1}{0}{1}{green};
        \hexagonunitcolor{1}{2}{1}{green};
        \hexagonunitcolor{1}{1}{2}{blue};
        \hexagonunitcolor{1}{2}{2}{blue};
        \begin{scope}[xshift=0 * 1cm]
        \draw[red, line width=2pt] (150:0.577) -- (90:0.577) -- (30:0.577);
        \end{scope}
        \begin{scope}[xshift=1 * 1cm]
        \draw[red, line width=2pt] (150:0.577) -- (90:0.577) -- (30:0.577);
        \end{scope}
        \node[] () at (0.5, 0.866) {$C$};
        \node[] () at (0, 1.732) {$B$};
        \node[] () at (1.5, 0.866) {$D$};
    \end{tikzpicture}.
    \label{eq:a1_boundary_invalid}
\end{equation}
In words, \textbf{A1} may not hold if the boundaries between $B$ and $D$ are both away from the domain wall and lie on the same side with respect to the domain wall. Otherwise, the \textbf{A1} continues to hold. Again, we say that the reference state is a \emph{valid} reference state (for gapped domain walls) if it satisfies these constraints.  
\begin{theorem}
    \label{thm:main_result_2}
    For any valid reference state for gapped domain walls, there exists a local commuting projector parent Hamiltonian satisfying the local topological order condition.
\end{theorem}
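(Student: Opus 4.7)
The plan is to follow the proof of Theorem~\ref{theorem:main1}, adapting the construction to accommodate the relaxed axioms along the domain wall. As in the bulk case, I would build the Hamiltonian as $H = \sum_{X \in \mathcal{S}}(1-P_X)$ with $P_X = \mathrm{proj}_{\mathrm{supp}(\sigma_X)}$, choosing the set $\mathcal{S}$ so that every pair of overlapping regions is topologically a chain $X_1 = AB$, $X_2 = BC$, and then invoking Proposition~\ref{proposition:commutation_markov} to conclude $[P_{X_1},P_{X_2}]=0$. Away from the domain wall, the set $\mathcal{S}$ consists of the same green/blue/red union regions as in the bulk construction, and the argument is unchanged. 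The content of the theorem therefore lies in arranging the regions that touch the wall so that the derivation of $I(A:C|B)_{\sigma}=0$ on the relevant chain triples only ever invokes the permitted \textbf{A1}-configurations of \eqref{eq:a1_boundary_valid}, never those of \eqref{eq:a1_boundary_invalid}.

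First I would fix a coarse-grained cellular decomposition in which the domain wall runs along a line of cell boundaries. Near the wall, I would shift or thicken the choice of $\mathcal{S}$ so that in every overlapping pair $(X_1, X_2) \in \mathcal{S}$, the boundary separating them either (a) lies entirely in the bulk on one side of the wall, in which case the bulk derivation of $I=0$ applies verbatim, or (b) crosses the wall transversally, which is precisely the configuration covered by the valid \textbf{A1}-relations in \eqref{eq:a1_boundary_valid}. The forbidden same-side pattern \eqref{eq:a1_boundary_invalid} must be ruled out by the geometry of $\mathcal{S}$. Following the extension-of-axioms arguments in Section~\ref{sec:extensions_of_axioms} but restricted to the permitted axioms, one then obtains $I(A:C|B)_{\sigma}=0$ for every chain triple arising from an overlap in $\mathcal{S}$, hence $[P_{X_1},P_{X_2}]=0$ by Proposition~\ref{proposition:commutation_markov}, and $H$ is a commuting projector parent Hamiltonian for $\sigma$. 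To verify local topological quantum order, I would reuse the bulk LTQO argument for disks disjoint from the wall, and for disks intersecting the wall I would run the analogous argument using the gapped-domain-wall information-convex-set machinery of Ref.~\cite{Shi2021} to show that the reduced density matrix of any zero-energy state on an interior subdisk is uniquely determined.

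The hard part will be the combinatorial step of arranging $\mathcal{S}$ along the wall. The forbidden configuration \eqref{eq:a1_boundary_invalid} is precisely what arises most naturally from a translation-invariant cover near the wall, so the cover must be modified, either by widening the wall-crossing regions so that every overlap with a neighbour is transverse rather than parallel to the wall, or by introducing a distinguished family of ``bridge'' regions along the wall that play the role of the green/blue/red triples in the bulk. Once a cover with this transversality property is produced, the remainder of the proof is essentially a line-by-line transcription of the proof of Theorem~\ref{theorem:main1}, with each application of a bulk \textbf{A1}-relation replaced by its domain-wall counterpart drawn from \eqref{eq:a1_boundary_valid}.
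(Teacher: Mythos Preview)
Your overall strategy is correct and matches the paper's proof in Section~\ref{sec:gapped_domain_wall}: same Hamiltonian, commutativity via Proposition~\ref{proposition:commutation_markov} using the domain-wall \textbf{A1} relations, and an LTQO argument adapted from Section~\ref{sec:ltqo}. However, you anticipate two obstacles that do not arise.

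First, the paper does \emph{not} modify the cover near the wall. It uses the identical cellular decomposition of Section~\ref{sec:parent_hamiltonian_final}, merely placing the wall along a specific row of the lattice (Fig.~\ref{fig:decomposition_gdw}). With this placement, all three overlap types (horizontal, vertical, diagonal) that touch the wall still satisfy $S(C|B)_\sigma + S(C|D)_\sigma = 0$ via the permitted domain-wall extensions of \textbf{A1} (Fig.~\ref{fig:cmi_upper_bounds_gdw}, citing Ref.~\cite{Shi2021}). Your assertion that the forbidden configuration \eqref{eq:a1_boundary_invalid} ``is precisely what arises most naturally'' and that the cover ``must be modified'' is unfounded: no shifting, thickening, or bridge regions are needed.

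Second, for LTQO near the wall the paper does not invoke the information-convex-set machinery of Ref.~\cite{Shi2021}. It reuses the argument of Section~\ref{sec:ltqo} essentially verbatim: \textbf{A0} (which holds everywhere, wall or not) and Lemma~\ref{lemma:a0_consequence} pin down the marginals on elementary disks, and the induction that grows $D_i$ one face at a time goes through using only the valid \textbf{A1} configurations, provided one restricts attention to disks whose intersection with the wall is a single interval (the footnote in Section~\ref{sec:gapped_domain_wall} makes this restriction explicit). Your proposed route would also work, but the paper's is considerably simpler.
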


\begin{figure}[t]
    \centering
    \begin{tikzpicture}[scale=0.6]
    \foreach \x in {0,...,8}
    {
    \foreach \y in {0,...,8}
    {
    \hexagon{1}{\x}{\y};
    }
    }
    \begin{scope}[xshift = -2cm, yshift= 4 * 0.866cm]
    \foreach \x in {0, ..., 8}
    {
    \begin{scope}[xshift=\x * 1cm]
        \draw[red, line width=2pt] (210:0.577) -- (270:0.577) -- (330:0.577);
    \end{scope}
    }
    \end{scope}
    
    \draw[dotted, line width=2pt] (7, 0.866cm*4) -- ++ (0.65cm, 0);
    \draw[dotted, line width=2pt] (-3, 0.866cm*4) -- ++ (-0.65cm, 0);
    \draw[dotted, line width=2pt] (4, -0.866cm) -- ++ (0.25cm*1.25, -0.433cm*1.25);
    \draw[dotted, line width=2pt] (0, 0.866cm*9) -- ++ (-0.25cm*1.25, 0.433cm*1.25);
    \end{tikzpicture}
    \caption{ A 2D plane with a gapped domain wall (red). For all the elementary disks that do not touch the red line,  \textbf{A1} continues to hold. However, for the elementary disks that touch the red line, \textbf{A1} may not necessarily hold. The axiom \textbf{A0} continues to hold everywhere.}
    \label{fig:gdw_setup}
\end{figure}
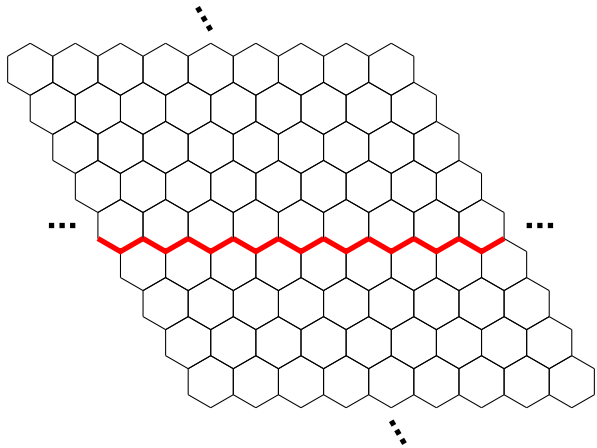

We now turn to a brief discussion on basic facts about quantum entropies [Section~\ref{sec:preliminary}]. The rest of the paper will focus on the proof of Theorem~\ref{theorem:main1}.

\subsection{Preliminary: Entropy inequalities}
\label{sec:preliminary}

Throughout this paper, density matrices are denoted by Greek letters, such as $\rho, \sigma, \tau, \ldots$ The subscripts refer to the Hilbert spaces the density matrices act on. For instance, $\rho_{ABC}$ is a density matrix acting on a Hilbert space $\mathcal{H}_{ABC}= \mathcal{H}_A\otimes \mathcal{H}_B\otimes \mathcal{H}_C$. The set of density matrices over subsystem $X$ is denoted as $\mathcal{D}_X$. The von Neumann entropy is defined as $S(\rho) = -\text{Tr}(\rho \log \rho)$. Entanglement entropy is defined as $S(A)_\rho = S(\rho_A)$. Conditional mutual information over a tripartite state $\rho_{ABC}$ is defined as 
\begin{equation}
    I(A:C|B)_{\rho} := S(AB)_{\rho} + S(BC)_{\rho}- S(B)_{\rho} - S(ABC)_{\rho}. \label{eq:ssa}
\end{equation}

We remark that the strong subadditivity of entropy implies the following inequalities, which we shall often use:
\begin{equation}
    \begin{aligned}
        &I(AA':CC'|B)_{\rho} \geq I(A:C|B)_{\rho}, \\
        &I(AA':C|B)_{\rho}  \geq I(A:C|A'B)_{\rho}.
    \end{aligned}
\end{equation}
We also remark that, for any tripartite quantum state $\rho_{ABC}$, the following inequality always holds.
\begin{equation}
    S(A|B)_{\rho} + S(A|C)_{\rho}\geq 0.\label{eq:wm}
\end{equation}
Eq.~\eqref{eq:wm} is known as the \emph{weak monotonicity}, a fact that we shall often use as well.

\section{Extensions of axioms}
\label{sec:extensions_of_axioms}

Recall our central assumption: that the reference state $\sigma$ satisfies \textbf{A0} and \textbf{A1} in the vicinity of every face $f\in \Lambda$ [Eq.~\eqref{eq:a0_starting_point},~\eqref{eq:a1_starting_point}]. One of the foundational facts in Ref.~\cite{shi2020fusion} is that the axioms \textbf{A0} and \textbf{A1} imply the same axioms at a larger scale. Brought to our current setup, this would imply, for instance, that 
\begin{equation}
    \left(S(C)  + S(C|B)  \right)_{\sigma} = 0 \quad \text{ for }\quad 
    \begin{tikzpicture}[scale=0.45, baseline={([yshift=-.55ex]current bounding box.center)}]
        \hexagonunitcolor{1}{1}{1}{red};
        \hexagonunitcolor{1}{-1}{0}{blue};
        \hexagonunitcolor{1}{-1}{1}{blue};
        \hexagonunitcolor{1}{-1}{2}{blue};
        \hexagonunitcolor{1}{0}{2}{blue};
        \hexagonunitcolor{1}{0}{0}{blue};
        \hexagonunitcolor{1}{1}{0}{blue};
        \hexagonunitcolor{1}{2}{0}{blue};
        \hexagonunitcolor{1}{0}{1}{red};
        \hexagonunitcolor{1}{2}{1}{blue};
        \hexagonunitcolor{1}{3}{1}{blue};
        \hexagonunitcolor{1}{1}{2}{blue};
        \hexagonunitcolor{1}{2}{2}{blue};
        \hexagonunitcolor{1}{3}{2}{blue};
        \node[] () at (-0.5, 0.866) {$C$};
        \node[] () at (-1.5, 0.866) {$B$};
    \end{tikzpicture} \label{eq:a0_extended_example}
\end{equation}
and 
\begin{equation}
    \left(S(C|B) + S(C|D) \right)_{\sigma} = 0 \quad \text{ for }\quad 
    \begin{tikzpicture}[scale=0.45, baseline={([yshift=-.55ex]current bounding box.center)}]
        \hexagonunitcolor{1}{1}{1}{red};
        \hexagonunitcolor{1}{-1}{0}{blue};
        \hexagonunitcolor{1}{-1}{1}{blue};
        \hexagonunitcolor{1}{-1}{2}{blue};
        \hexagonunitcolor{1}{0}{2}{green};
        \hexagonunitcolor{1}{0}{0}{blue};
        \hexagonunitcolor{1}{1}{0}{blue};
        \hexagonunitcolor{1}{2}{0}{blue};
        \hexagonunitcolor{1}{0}{1}{red};
        \hexagonunitcolor{1}{2}{1}{green};
        \hexagonunitcolor{1}{3}{1}{green};
        \hexagonunitcolor{1}{1}{2}{green};
        \hexagonunitcolor{1}{2}{2}{green};
        \hexagonunitcolor{1}{3}{2}{green};
        \node[] () at (-0.5, 0.866) {$C$};
        \node[] () at (-1.5, 0.866) {$B$};
        \node[] () at (1.5, 0.866) {$D$};
    \end{tikzpicture} \label{eq:a1_extended_example}
\end{equation}
An important similarity between Eq.~\eqref{eq:a0_starting_point} and Eq.~\eqref{eq:a0_extended_example} (and similarly, between Eq.~\eqref{eq:a1_starting_point} and Eq.~\eqref{eq:a1_extended_example}) is that a disk ($C$) is surrounded by the union of two disks ($B$ and $D$). Moreover, the union $BCD$ is again a disk. While the precise shapes of the subsystems labeled in $B$, $D$, $BC$, and $CD$ differ between Eq.~\eqref{eq:a1_starting_point} and Eq.~\eqref{eq:a1_extended_example}, topologically they are the same. Eq.~\eqref{eq:a0_extended_example} and \eqref{eq:a1_extended_example} are examples of what we mean by an extension of axioms \textbf{A0} and \textbf{A1}. While the argument for such extensions has been discussed in Ref.~\cite{shi2020fusion} already, we revisit this argument in this paper for the sake of concreteness. 

\subsection{Extending \textbf{A0}}
\label{sec:extending_a0}

Let us first consider the simpler case: proving extensions of \textbf{A0}. While the argument we present below is general, we aid the readers with a concrete example. Suppose, for instance, that our goal is to prove Eq.~\eqref{eq:a0_extended_example}. How would we proceed? The main idea can be schematically expressed as follows:
\begin{equation}
\begin{tikzpicture}[scale=0.45, baseline={([yshift=-.55ex]current bounding box.center)}]
        \hexagonunitcolor{1}{1}{1}{blue};
        \hexagonunitcolor{1}{-1}{0}{blue};
        \hexagonunitcolor{1}{-1}{1}{blue};
        \hexagonunitcolor{1}{-1}{2}{white};
        \hexagonunitcolor{1}{0}{2}{blue};
        \hexagonunitcolor{1}{0}{0}{blue};
        \hexagonunitcolor{1}{1}{0}{white};
        \hexagonunitcolor{1}{2}{0}{white};
        \hexagonunitcolor{1}{0}{1}{red};
        \hexagonunitcolor{1}{2}{1}{white};
        \hexagonunitcolor{1}{3}{1}{white};
        \hexagonunitcolor{1}{1}{2}{blue};
        \hexagonunitcolor{1}{2}{2}{white};
        \hexagonunitcolor{1}{3}{2}{white};
        \node[] () at (-0.5, 0.866) {$C$};
        \node[] () at (-1.5, 0.866) {$B$};
    \end{tikzpicture} 
\stackrel{\text{\textbf{A1}}}{\longrightarrow}
    \begin{tikzpicture}[scale=0.45, baseline={([yshift=-.55ex]current bounding box.center)}]
        \hexagonunitcolor{1}{1}{1}{red};
        \hexagonunitcolor{1}{-1}{0}{blue};
        \hexagonunitcolor{1}{-1}{1}{blue};
        \hexagonunitcolor{1}{-1}{2}{blue};
        \hexagonunitcolor{1}{0}{2}{blue};
        \hexagonunitcolor{1}{0}{0}{blue};
        \hexagonunitcolor{1}{1}{0}{blue};
        \hexagonunitcolor{1}{2}{0}{blue};
        \hexagonunitcolor{1}{0}{1}{red};
        \hexagonunitcolor{1}{2}{1}{blue};
        \hexagonunitcolor{1}{3}{1}{blue};
        \hexagonunitcolor{1}{1}{2}{blue};
        \hexagonunitcolor{1}{2}{2}{blue};
        \hexagonunitcolor{1}{3}{2}{blue};
        \node[] () at (-0.5, 0.866) {$C$};
        \node[] () at (-1.5, 0.866) {$B$};
    \end{tikzpicture}.
\end{equation}
This diagram means that we begin with $\left(S(C) + S(BC) - S(B) \right)_{\sigma}=0$ for the subsystems $B$ and $C$ on the left hand side and then gradually change the subsystems via $\textbf{A1}$ to arrive at $\left(S(C) + S(BC) - S(B) \right)_{\sigma}=0$ for the subsystems shown on the right hand side.

As a first step, we can first argue
\begin{equation}
\left( S(C) + S(C|B) \right)_{\sigma}=0 \quad \text{ for } \quad \begin{tikzpicture}[scale=0.45, baseline={([yshift=-.55ex]current bounding box.center)}]
        \hexagonunitcolor{1}{1}{1}{blue};
        \hexagonunitcolor{1}{-1}{0}{blue};
        \hexagonunitcolor{1}{-1}{1}{blue};
        \hexagonunitcolor{1}{-1}{2}{blue};
        \hexagonunitcolor{1}{0}{2}{blue};
        \hexagonunitcolor{1}{0}{0}{blue};
        \hexagonunitcolor{1}{1}{0}{blue};
        \hexagonunitcolor{1}{2}{0}{blue};
        \hexagonunitcolor{1}{0}{1}{red};
        \hexagonunitcolor{1}{2}{1}{blue};
        \hexagonunitcolor{1}{3}{1}{blue};
        \hexagonunitcolor{1}{1}{2}{blue};
        \hexagonunitcolor{1}{2}{2}{blue};
        \hexagonunitcolor{1}{3}{2}{blue};
        \node[] () at (-0.5, 0.866) {$C$};
        \node[] () at (-1.5, 0.866) {$B$};
    \end{tikzpicture}.
    \label{eq:a0_extension_first_argument}
\end{equation}
This is a simple consequence of SSA:
\begin{equation}
    \left(S(C) + S(C|B) \right)_{\sigma} \geq \left(S(C) + S(C|BB') \right)_{\sigma}
\end{equation}
for any subsystem $B'$. Now we can evaluate the difference of $\left(S(C) + S(C|B) \right)_{\sigma}$ appearing in Eq.~\eqref{eq:a0_extended_example} and Eq.~\eqref{eq:a0_extension_first_argument}, the result of which is
\begin{equation}
    \left(S(C|B) + S(C|D) \right)_{\sigma} \quad \text{ for } \quad 
    \begin{tikzpicture}[scale=0.45, baseline={([yshift=-.55ex]current bounding box.center)}]
        \hexagonunitcolor{1}{1}{1}{red};
        \hexagonunitcolor{1}{-1}{0}{green};
        \hexagonunitcolor{1}{-1}{1}{green};
        \hexagonunitcolor{1}{-1}{2}{green};
        \hexagonunitcolor{1}{0}{2}{green};
        \hexagonunitcolor{1}{0}{0}{green};
        \hexagonunitcolor{1}{1}{0}{green};
        \hexagonunitcolor{1}{2}{0}{green};
        \hexagonunitcolor{1}{0}{1}{blue};
        \hexagonunitcolor{1}{2}{1}{green};
        \hexagonunitcolor{1}{3}{1}{green};
        \hexagonunitcolor{1}{1}{2}{green};
        \hexagonunitcolor{1}{2}{2}{green};
        \hexagonunitcolor{1}{3}{2}{green};
        \node[] () at (0.5, 0.866) {$C$};
        \node[] () at (-0.5, 0.866) {$B$};
        \node[] () at (1.5, 0.866) {$D$};
    \end{tikzpicture}.
    \label{eq:a1_extension_second_step1}
\end{equation}
Conveniently, this quantity is zero. To see why, note that
\begin{equation}
    \left(S(C|B) + S(C|D) \right)_{\sigma}=0 \quad \text{ for } \quad 
    \begin{tikzpicture}[scale=0.45, baseline={([yshift=-.55ex]current bounding box.center)}]
        \hexagonunitcolor{1}{1}{1}{red};
        \hexagonunitcolor{1}{-1}{0}{white};
        \hexagonunitcolor{1}{-1}{1}{white};
        \hexagonunitcolor{1}{-1}{2}{white};
        \hexagonunitcolor{1}{0}{2}{white};
        \hexagonunitcolor{1}{0}{0}{green};
        \hexagonunitcolor{1}{1}{0}{green};
        \hexagonunitcolor{1}{2}{0}{white};
        \hexagonunitcolor{1}{0}{1}{blue};
        \hexagonunitcolor{1}{2}{1}{green};
        \hexagonunitcolor{1}{3}{1}{white};
        \hexagonunitcolor{1}{1}{2}{green};
        \hexagonunitcolor{1}{2}{2}{green};
        \hexagonunitcolor{1}{3}{2}{white};
        \node[] () at (0.5, 0.866) {$C$};
        \node[] () at (-0.5, 0.866) {$B$};
        \node[] () at (1.5, 0.866) {$D$};
    \end{tikzpicture}
    \label{eq:a1_extension_second_step2}
\end{equation}
by our assumption [Eq.~\eqref{eq:a0_starting_point}]. A simple consequence of SSA is that
\begin{equation}
    \left(S(C|B) + S(C|D) \right)_{\sigma} \geq \left(S(C|B) + S(C|DD') \right)_{\sigma}
\end{equation}
for any $D'$. Viewing the green faces in Eq.~\eqref{eq:a1_extension_second_step1} that are colored in white in Eq.~\eqref{eq:a1_extension_second_step2} as $D'$, we can see that $\left(S(C|B) + S(C|D) \right)_{\sigma} \leq 0$ for the choice $B, C,$ and $D$ shown in Eq.~\eqref{eq:a1_extension_second_step1}. At the same time, by the weak monotonicity, this quantity must be always zero. Thus the difference of $\left(S(C) + S(C|B) \right)_{\sigma}$ appearing in  Eq.~\eqref{eq:a0_extended_example} and Eq.~\eqref{eq:a0_extension_first_argument} are exactly equal to each other, and in fact, both zero. This completes the proof.

We can generalize this procedure into an algorithm that extends our axioms. Without loss of generality, suppose our task is to prove $\left(S(C) + S(C|B) \right)_{\sigma}=0$, where $C$ is a disk and $B$ is an annulus surrounding $B$, so that $BC$ together forms a disk. Pick any face $c\in C$ and let $b=  N(c)$. By assumption, $\left(S(c) + S(c|b) \right)_{\sigma}=0$. From SSA, it follows that $\left(S(c) + S(c|BC\setminus c) \right)_{\sigma}=0$. Now, the idea is to extend the disk $c$ into $C$, adding one face at a time. Specifically, consider a sequence of disks $\{C_i:i=1,\ldots n \}$ such that $c_1=c$, $C_n= C$ and $C_i\subset C_{i+1}$.\footnote{Such a sequence of disks can be constructed as follows. For each $C_i$, consider a face adjacent to $C_i$, which is still part of $C$, and then choose that to be $c_i$.}  Letting $c_{i+1} := C_{i+1}\setminus C_i$ be a face, note the following identity:
\begin{equation}
    \left(S(C_{i+1}) +S(C_{i+1}|BC\setminus C_{i+1}) - S(C_i) -S(C_i|BC\setminus C_i)\right)_{\sigma}= \left(S(c_{i+1}|C_i) + S(c_{i+1}| BC\setminus C_{i+1})\right)_{\sigma}. \label{eq:a1_extension_argument_general}
\end{equation}
Because $B\supset N(C)$ and $c_{i+1} \subset C$, $N(c_{i+1}) \subset BC$. Therefore, we can partition $N(c_{i+1})$ into $N(c_{i+1}) = (N(c_{i+1})_1 \cup N(c_{i+1})_2$, where
\begin{equation}
\begin{aligned}
    N(c_{i+1})_1 &:= N(c_{i+1}) \cap C_i,\\
    N(c_{i+1})_2 &:= N(c_{i+1}) \cap BC\setminus C_{i+1}.
\end{aligned}
\end{equation}
Note that, by SSA, the following inequalities hold:
\begin{equation}
    \begin{aligned}
        S(c_{i+1}|C_i) &\leq S(c_{i+1}|N(c_{i+1})_1 ), \\
        S(c_{i+1}|BC\setminus C_{i+1} )&\leq S(c_{i+1}|N(c_{i+1})_2),
    \end{aligned}
\end{equation}
which together with weak monotoncity implies that Eq.~\eqref{eq:a1_extension_argument_general} is zero. Iterating this argument from $i=1$ to $i=n-1$, we conclude that $\left(S(C) + S(C|B) \right)_{\sigma}=0$, proving our claim.

\subsection{Extending \textbf{A1}}
\label{sec:extending_a1}

Though different in details, the argument for extending \textbf{A1} is similar in spirit. Let us again work through a guiding example; below, we prove $\left(S(C|B) + S(C|D) \right)_{\sigma}=0$ for the choice of subsystems shown on right hand side of Eq.~\eqref{eq:a1_extension_example_setup}. This argument begins with $\left(S(C|B) + S(C|D) \right)_{\sigma}=0$ for the subsystems shown on the left hand side of Eq.~\eqref{eq:a1_extension_example_setup} (which can be viewed as the assumption \textbf{A1}), followed by applications of \textbf{A1} over other elementary disks.
\begin{equation}
    \begin{tikzpicture}[scale=0.45, baseline={([yshift=-.55ex]current bounding box.center)}]
        \hexagonunitcolor{1}{1}{1}{green};
        \hexagonunitcolor{1}{-1}{0}{blue};
        \hexagonunitcolor{1}{-1}{1}{blue};
        \hexagonunitcolor{1}{-1}{2}{white};
        \hexagonunitcolor{1}{0}{2}{green};
        \hexagonunitcolor{1}{0}{0}{blue};
        \hexagonunitcolor{1}{1}{0}{white};
        \hexagonunitcolor{1}{2}{0}{white};
        \hexagonunitcolor{1}{0}{1}{red};
        \hexagonunitcolor{1}{2}{1}{white};
        \hexagonunitcolor{1}{3}{1}{white};
        \hexagonunitcolor{1}{1}{2}{green};
        \hexagonunitcolor{1}{2}{2}{white};
        \hexagonunitcolor{1}{3}{2}{white};
        \hexagonunitcolor{1}{1}{3}{white};
        \hexagonunitcolor{1}{2}{3}{white};
        \node[] () at (-0.5, 0.866) {$C$};
        \node[] () at (-1.5, 0.866) {$B$};
        \node[] () at (0.5, 0.866) {$D$};
    \end{tikzpicture}
    \stackrel{\text{\textbf{A1}}}{\longrightarrow}
    \begin{tikzpicture}[scale=0.45, baseline={([yshift=-.55ex]current bounding box.center)}]
        \hexagonunitcolor{1}{1}{1}{red};
        \hexagonunitcolor{1}{-1}{0}{blue};
        \hexagonunitcolor{1}{-1}{1}{blue};
        \hexagonunitcolor{1}{-1}{2}{blue};
        \hexagonunitcolor{1}{0}{2}{green};
        \hexagonunitcolor{1}{0}{0}{blue};
        \hexagonunitcolor{1}{1}{0}{blue};
        \hexagonunitcolor{1}{2}{0}{blue};
        \hexagonunitcolor{1}{0}{1}{red};
        \hexagonunitcolor{1}{2}{1}{green};
        \hexagonunitcolor{1}{3}{1}{green};
        \hexagonunitcolor{1}{1}{2}{red};
        \hexagonunitcolor{1}{2}{2}{green};
        \hexagonunitcolor{1}{3}{2}{green};
        \hexagonunitcolor{1}{1}{3}{green};
        \hexagonunitcolor{1}{2}{3}{green};
        \node[] () at (-0.5, 0.866) {$C$};
        \node[] () at (-1.5, 0.866) {$B$};
        \node[] () at (1.5, 0.866) {$D$};
    \end{tikzpicture}.
    \label{eq:a1_extension_example_setup}
\end{equation}

Our argument is best explained in terms of the \emph{deformations of the boundaries}. Specifically, for any subsystem $A\subset \Lambda$, let us define $\partial A$ as the boundary of $A$, viewing $A$ as a subset of $\mathbb{R}^2$. These boundaries can be classified into the inner and the outer boundaries; the outer boundary is $\partial (BCD)$ and the inner boundaries are $\partial B\cap \partial C$, $\partial C \cap \partial D$, and $\partial B \cap \partial D$, drawn in green, blue, and red below:
\begin{equation}
    \begin{tikzpicture}[scale=0.45, baseline={([yshift=-.55ex]current bounding box.center)}]
        \hexagonunitcolor{1}{1}{1}{green};
        \hexagonunitcolor{1}{-1}{0}{blue};
        \hexagonunitcolor{1}{-1}{1}{blue};
        \hexagonunitcolor{1}{-1}{2}{white};
        \hexagonunitcolor{1}{0}{2}{green};
        \hexagonunitcolor{1}{0}{0}{blue};
        \hexagonunitcolor{1}{1}{0}{white};
        \hexagonunitcolor{1}{2}{0}{white};
        \hexagonunitcolor{1}{0}{1}{red};
        \hexagonunitcolor{1}{2}{1}{white};
        \hexagonunitcolor{1}{3}{1}{white};
        \hexagonunitcolor{1}{1}{2}{green};
        \hexagonunitcolor{1}{2}{2}{white};
        \hexagonunitcolor{1}{3}{2}{white};
        \hexagonunitcolor{1}{1}{3}{white};
        \hexagonunitcolor{1}{2}{3}{white};
        \node[] () at (-0.5, 0.866) {$C$};
        \node[] () at (-1.5, 0.866) {$B$};
        \node[] () at (0.5, 0.866) {$D$};
        \begin{scope}[yshift=0.04cm]
        \draw[line width=0.075cm, green] (-1, 0.866+0.25) --++ (0, -0.577) --++ (0.5, -0.25) --++ (0.5, 0.25); 
        \draw[line width=0.075cm, blue] (-1, 0.866+0.25) --++ (0.5, 0.25) --++ (0.5, -0.25) -- ++ (0, -0.577);
        \draw[line width=0.075cm, red] (-1, 0.866+0.25) --++ (-0.5, 0.25); 
        \draw[line width=0.075cm, red] (0, 0.866+0.25-0.577) --++ (0.5, -0.25); 
        \end{scope}
    \end{tikzpicture}
    \stackrel{\text{\textbf{A1}}}{\longrightarrow}
    \begin{tikzpicture}[scale=0.45, baseline={([yshift=-.55ex]current bounding box.center)}]
        \hexagonunitcolor{1}{1}{1}{red};
        \hexagonunitcolor{1}{-1}{0}{blue};
        \hexagonunitcolor{1}{-1}{1}{blue};
        \hexagonunitcolor{1}{-1}{2}{blue};
        \hexagonunitcolor{1}{0}{2}{green};
        \hexagonunitcolor{1}{0}{0}{blue};
        \hexagonunitcolor{1}{1}{0}{blue};
        \hexagonunitcolor{1}{2}{0}{blue};
        \hexagonunitcolor{1}{0}{1}{red};
        \hexagonunitcolor{1}{2}{1}{green};
        \hexagonunitcolor{1}{3}{1}{green};
        \hexagonunitcolor{1}{1}{2}{red};
        \hexagonunitcolor{1}{2}{2}{green};
        \hexagonunitcolor{1}{3}{2}{green};
        \hexagonunitcolor{1}{1}{3}{green};
        \hexagonunitcolor{1}{2}{3}{green};
        \node[] () at (-0.5, 0.866) {$C$};
        \node[] () at (-1.5, 0.866) {$B$};
        \node[] () at (1.5, 0.866) {$D$};
        \begin{scope}[yshift=0.04cm]
        \draw[line width=0.075cm, green] (-1, 0.866+0.25) --++ (0, -0.577) --++ (0.5, -0.25) --++ (0.5, 0.25) --++ (0.5, -0.25) -- ++ (0.5, 0.25); 
        \draw[line width=0.075cm, blue] (-1, 0.866+0.25) --++ (0.5, 0.25) --++ (0, 0.577) -- ++ (0.5, 0.25) -- ++ (0.5, -0.25) -- ++ (0, -0.577) -- ++ (0.5, -0.25) -- ++ (0, -0.577);
        \draw[line width=0.075cm, red] (-1, 0.866+0.25) --++ (-0.5, 0.25) --++ (0, 0.577); 
        \draw[line width=0.075cm, red] (1, 0.866+0.25-0.577) --++ (0.5, -0.25) --++ (0.5, 0.25) -- ++ (0.5, -0.25); 
        \end{scope}
    \end{tikzpicture}.
    \label{eq:a1_extension_example_setup_with_boundaries}
\end{equation}
We shall deform the shown boundaries, by deforming the subsystems ($BC$, $CD$, $B$, and $D$) by one face at a time whilst maintaining their topologies. The key point here is that the value of $\left(S(C|B) + S(C|D) \right)_{\sigma}$ remains invariant under two types of moves that deform the boundaries. By combining these two moves together, one can conclude $\left(S(C|B) + S(C|D) \right)_{\sigma}$ on the right hand side of Eq.~\eqref{eq:a1_extension_example_setup_with_boundaries} is zero.

The first move is to deform the outer boundaries, which entails enlarging/reducing either $B$ or $D$. This move keeps the green and the blue boundary unchanged, but it may deform the red boundaries. Since the quantity at hand $\left(S(C|B) + S(C|D) \right)_{\sigma}$ is symmetric under the exchange of $B$ and $D$, an argument that works for one of them works for the other as well. We thus focus on enlarging/reducing $D$. Without loss of generality, let us consider the difference between $\left(S(C|B) + S(C|D) \right)_{\sigma}$ and $\left(S(C|B) + S(C|Dd) \right)_{\sigma}$, where $d\in \Lambda \setminus (BCD)$ is a face. The difference between the two quantities is $I(d:C|D)_{\sigma}$.  
By SSA,
\begin{equation}
    I(d:C|D)_{\sigma} \leq I(d:CD\setminus N(d)|D\cap N(d))_{\sigma},
    \label{}
\end{equation}
where we used the fact that $C$ has a trivial intersection with $N(d)$.\footnote{Recall that $C$ is chosen in such a way that it is surrounded by $B$ and $D$. Therefore, $C$ cannot be in the neighborhood of $d$, which is in $\Lambda \setminus (BCD)$.} Further, 
\begin{equation}
    I(d:CD\setminus N(d)|D\cap N(d))_{\sigma} \leq \left(S(d|D\cap N(d)) + S(d|N(d)\setminus D) \right)_{\sigma}. \label{eq:22}
\end{equation}
By \textbf{A1}, the right hand side Eq.~\eqref{eq:22} is zero if $D\cap N(d)$ and $N(d)\setminus D$ are both disks. 

The condition that these subsystems ought to be disks can be ensured for any face $d$ as long as both $D$ and $Dd$ are disks. Suppose otherwise. That is, there is a face $d$ such that both $D$ and $Dd$ are disks but $N(d)\cap D$ is not a disk. Thus $N(d)\cap D$ must be a union of at least two disjoint disks. Moreover, there is a disk in $N(d)\setminus D$ --- surrounded by the aforementioned two disks, $d$, and $D$ --- whose neighborhood is contained in $Dd$. Any loop that surrounds the aforementioned disk in $N(d)\setminus D$ is not contractible to a point, so $Dd$ is not simply connected, which is a contradiction. 

Therefore, provided that $D$ and $Dd$ are disks, the right hand side of Eq.~\eqref{eq:22} is at most $0$. Since conditional mutual information is non-negative,  $I(d:C|D)_{\sigma}=0$. Thus 
\begin{equation}
    \left(S(C|B) + S(C|D) \right)_{\sigma} = \left(S(C|B) + S(C|Dd) \right)_{\sigma}.
\end{equation}
The same argument can be employed to argue the invariance under deformation of $B$ which retains its disk-like topology. Thus we conclude that $\left(S(C|B) + S(C|D) \right)_{\sigma}=0$ for the subsystems shown below:
\begin{equation}
    \begin{tikzpicture}[scale=0.45, baseline={([yshift=-.55ex]current bounding box.center)}]
        \hexagonunitcolor{1}{1}{1}{green};
        \hexagonunitcolor{1}{-1}{0}{blue};
        \hexagonunitcolor{1}{-1}{1}{blue};
        \hexagonunitcolor{1}{-1}{2}{blue};
        \hexagonunitcolor{1}{0}{2}{green};
        \hexagonunitcolor{1}{0}{0}{blue};
        \hexagonunitcolor{1}{1}{0}{blue};
        \hexagonunitcolor{1}{2}{0}{blue};
        \hexagonunitcolor{1}{0}{1}{red};
        \hexagonunitcolor{1}{2}{1}{green};
        \hexagonunitcolor{1}{3}{1}{green};
        \hexagonunitcolor{1}{1}{2}{green};
        \hexagonunitcolor{1}{2}{2}{green};
        \hexagonunitcolor{1}{3}{2}{green};
        \hexagonunitcolor{1}{1}{3}{green};
        \hexagonunitcolor{1}{2}{3}{green};
        \node[] () at (-0.5, 0.866) {$C$};
        \node[] () at (-1.5, 0.866) {$B$};
        \node[] () at (1.5, 0.866) {$D$};
        \begin{scope}[yshift=0.04cm]
        \draw[line width=0.075cm, green] (-1, 0.866+0.25) --++ (0, -0.577) --++ (0.5, -0.25) --++ (0.5, 0.25); 
        \draw[line width=0.075cm, blue] (-1, 0.866+0.25) --++ (0.5, 0.25) --++ (0.5, -0.25) -- ++ (0, -0.577);
        \draw[line width=0.075cm, red] (-1, 0.866+0.25) --++ (-0.5, 0.25) --++ (0, 0.577); 
        \draw[line width=0.075cm, red] (0, 0.866+0.25-0.577) --++ (0.5, -0.25) --++ (0.5, 0.25) -- ++ (0.5, -0.25) -- ++ (0.5, 0.25) -- ++ (0.5, -0.25); 
        \end{scope}
    \end{tikzpicture}.
    \label{eq:a1_extension_example_setup_with_boundaries_outer}
\end{equation}
Comparing this diagram to the right hand side of Eq.~\eqref{eq:a1_extension_example_setup_with_boundaries}, it is clear now the main task is to deform the inner boundaries. 

The argument for deforming the inner boundaries is exactly the same. Let us introduce a purifying system for $BCD$, denoted as $E$. Note the following identity:
\begin{equation}
    \left(S(C|B) + S(C|D) \right)_{\sigma} = \left(S(E|B) + S(E|D) \right)_{\sigma},
\end{equation}
which follows easily from the fact that the state over $BCDE$ is a pure state. The role played by $C$ in our previous argument can now be played by $E$. The new inner boundaries associated with  $B, E$ and $D$ (as opposed to $B, C,$ and $D$) is shown below:
\begin{equation}
    \begin{tikzpicture}[scale=0.45, baseline={([yshift=-.55ex]current bounding box.center)}]
        \hexagonunitcolor{1}{1}{1}{green};
        \hexagonunitcolor{1}{-1}{0}{blue};
        \hexagonunitcolor{1}{-1}{1}{blue};
        \hexagonunitcolor{1}{-1}{2}{blue};
        \hexagonunitcolor{1}{0}{2}{green};
        \hexagonunitcolor{1}{0}{0}{blue};
        \hexagonunitcolor{1}{1}{0}{blue};
        \hexagonunitcolor{1}{2}{0}{blue};
        \hexagonunitcolor{1}{2}{1}{green};
        \hexagonunitcolor{1}{3}{1}{green};
        \hexagonunitcolor{1}{1}{2}{green};
        \hexagonunitcolor{1}{2}{2}{green};
        \hexagonunitcolor{1}{3}{2}{green};
        \hexagonunitcolor{1}{1}{3}{green};
        \hexagonunitcolor{1}{2}{3}{green};
        \node[] () at (1, -0.866-0.433) {$E$ (Purifying system)};
        \node[] () at (-1.5, 0.866) {$B$};
        \node[] () at (1.5, 0.866) {$D$};
        \begin{scope}[yshift=0.04cm]
        \draw[line width=0.075cm, green] (-1.5, 0.866+0.5+0.577) --++ (-0.5, 0.25) --++ (-0.5, -0.25) --++ (0, -0.577) --++ (0.5, -0.25) --++ (0, -0.577) --++ (0.5, -0.25) --++ (0, -0.577) --++ (0.5, -0.25) --++ (0.5, 0.25)--++ (0.5, -0.25) --++ (0.5, 0.25)--++ (0.5, -0.25) --++ (0.5, 0.25)--++ (0.5, -0.25) --++ (0.5, 0.25) --++ (0, 0.577); 
        \draw[line width=0.075cm, blue] (-1.5, 0.866+0.5+0.577) --++ (0.5, 0.25) --++ (0, 0.577) --++ (0.5, 0.25) --++ (0.5, -0.25) --++ (0.5, 0.25) --++ (0.5, -0.25) --++ (0, -0.577) --++ (0.5, -0.25) --++ (0.5, 0.25) --++ (0.5, -0.25) --++ (0, -0.577) --++ (0.5, -0.25) --++ (0, -0.577) --++ (-0.5, -0.25);
        \draw[line width=0.075cm, red] (-1, 0.866+0.25) --++ (-0.5, 0.25) --++ (0, 0.577); 
        \draw[line width=0.075cm, red] (0, 0.866+0.25-0.577) --++ (0.5, -0.25) --++ (0.5, 0.25) -- ++ (0.5, -0.25) -- ++ (0.5, 0.25) -- ++ (0.5, -0.25); 
        \end{scope}
    \end{tikzpicture},
    \label{eq:a1_extension_example_setup_with_boundaries_inside_out}
\end{equation}
where the purifying system $E$ is not specified explicitly\footnote{Note that $E$ is not necessarily $\Lambda \setminus (BCD)$, but rather an abstract system that purifies $\sigma_{BCD}$.}. Note that the outer boundaries with respect to $B, E,$ and $D$ are precisely the green and the blue boundaries with respect to $B, C,$ and $D$. The argument for deforming these new outer boundaries is exactly the same as before, leading to
\begin{equation}
    \left(S(E|B) + S(E|D) \right)_{\sigma}=0 \quad \text{ for }\quad
    \begin{tikzpicture}[scale=0.45, baseline={([yshift=-.55ex]current bounding box.center)}]
        \hexagonunitcolor{1}{-1}{0}{blue};
        \hexagonunitcolor{1}{-1}{1}{blue};
        \hexagonunitcolor{1}{-1}{2}{blue};
        \hexagonunitcolor{1}{0}{2}{green};
        \hexagonunitcolor{1}{0}{0}{blue};
        \hexagonunitcolor{1}{1}{0}{blue};
        \hexagonunitcolor{1}{2}{0}{blue};
        \hexagonunitcolor{1}{2}{1}{green};
        \hexagonunitcolor{1}{3}{1}{green};
        \hexagonunitcolor{1}{2}{2}{green};
        \hexagonunitcolor{1}{3}{2}{green};
        \hexagonunitcolor{1}{1}{3}{green};
        \hexagonunitcolor{1}{2}{3}{green};
        \node[] () at (1, -0.866-0.433) {$E$ (Purifying system)};
        \node[] () at (-1.5, 0.866) {$B$};
        \node[] () at (1.5, 0.866) {$D$};
        \begin{scope}[yshift=0.04cm]
        \draw[line width=0.075cm, green] (-1.5, 0.866+0.5+0.577) --++ (-0.5, 0.25) --++ (-0.5, -0.25) --++ (0, -0.577) --++ (0.5, -0.25) --++ (0, -0.577) --++ (0.5, -0.25) --++ (0, -0.577) --++ (0.5, -0.25) --++ (0.5, 0.25)--++ (0.5, -0.25) --++ (0.5, 0.25)--++ (0.5, -0.25) --++ (0.5, 0.25)--++ (0.5, -0.25) --++ (0.5, 0.25) --++ (0, 0.577); 
        \draw[line width=0.075cm, blue] (-1.5, 0.866+0.5+0.577) --++ (0.5, 0.25) --++ (0, 0.577) --++ (0.5, 0.25) --++ (0.5, -0.25) --++ (0.5, 0.25) --++ (0.5, -0.25) --++ (0, -0.577) --++ (0.5, -0.25) --++ (0.5, 0.25) --++ (0.5, -0.25) --++ (0, -0.577) --++ (0.5, -0.25) --++ (0, -0.577) --++ (-0.5, -0.25);
        \draw[line width=0.075cm, red] (-1, 0.866+0.25) --++ (-0.5, 0.25) --++ (0, 0.577); 
        \draw[line width=0.075cm, red] (1, 0.866+0.25-0.577) --++ (0.5, -0.25) --++ (0.5, 0.25) -- ++ (0.5, -0.25); 
        \end{scope}
    \end{tikzpicture},
\end{equation}
which is equivalent to $\left(S(C|B) + S(C|D) \right)_{\sigma}=0$ for the subsystems shown on the right hand side of Eq.~\eqref{eq:a1_extension_example_setup_with_boundaries}.

The argument we employed in our example can be generalized to an algorithm. Without loss of generality, suppose our task is to show $\left(S(C|B) + S(C|D) \right)_{\sigma}$ for $B, C,$ and $D$ such that $BC, CD, B,$ and $D$ are disks and $BD$ is an annulus. First choose any $c\in C$. Now we choose $b, d \subset N(c)$ based on the red boundaries defined with respect to $B, C,$ and $D$. If a red boundary has a nontrivial overlap with $N(c)$, choose $b$ and $d$ such that the red boundaries defined with respect to $b, c,$ and $d$ overlap with the red boundaries defined with respect to $B, C,$ and $D$. Then, we can grow the red boundaries defined with respect to $b, c,$ and $d$ to the extent that they include the red boundaries defined with respect to $B, C,$ and $D$; this can be done by extending $b$ and $d$ by the blue and green faces adjacent to a unit interval of the red boundary. We can then trim the red boundary by introducing the purifying system and repeating the same argument, but this time reducing the red boundary. This process ensures that we have a choice of $B', C', D'$ such that their red boundaries match the red boundaries of $B, C,$ and $D$. Now, the outer and the blue/green boundaries can be deformed using the following greedy algorithm. For $D'$, at each time step, look at all the cells in $N(D')$ and if $N(D')\cap D$ is non-empty and if there is an element whose addition does not change the topology, add that. The same applies to $B'$. Then remove the unnecessary faces. 

While our results do not rely on showing
this algorithm works unconditionally for any subsystems topologically equivalent to the ones appearing in \textbf{A0} and \textbf{A1}, we expect that this is the case.\footnote{In fact, we suspect such an argument works even for tessellations of hyperbolic spaces.} Regardless, for the purpose of our paper, the description of the algorithm we provide shall be sufficient. This is because we simply need to extend \textbf{A0} and \textbf{A1} to a particular (finite) set of subsystems, for which one can verify that our algorithm works. These subsystems shall be introduced in Section~\ref{sec:parent_hamiltonian_final} and their construction shall be modified from the discussion in Section~\ref{sec:ltqo}, which we now turn our attention to.

\section{LTQO and local uniqueness}
\label{sec:ltqo}

In this section, we introduce the notion of local topological quantum order (LTQO), or having a locally unique ground state. Then we introduce a class of parent Hamiltonians for the reference state that satisfies this condition.

To produce a useful parent Hamiltonian, we want to establish some sense in which the ground state is unique.  While models such as the toric code  have degeneracies depending on the  topology of the manifold, they still satisfy a certain ``local uniqueness'' property.  In particular, for all topological disks $A$, every zero-energy state on $A$ looks identical on the interior of $A$.  In fact this is a reformulation of the condition introduced as  ``local topological quantum order'' (LTQO) by ~\cite{Michalakis2013}.  They show that LTQO implies the gap is stable to local perturbations.

 For simplicity, we define LTQO as a condition placed upon local, frustration-free Hamiltonians.  Thus the condition will be a special case of (\cite{Michalakis2013}, Definition 4). 

\begin{definition}[Local topological quantum order] 
\label{definition:ltqo}
    Consider a local frustration-free Hamiltonian, $H = \sum_X (1-Q_X)$, summed over some $O(1)$-size disks $X$, with projections $Q_X$ local to $X$. Let $Q_A$ denote the projection onto  $\{|\psi\rangle_A \in \mathcal{H}_A \,:\, (1-Q_X) |\psi\rangle_A= 0\; \forall X \subset A \}$, the space of states on $A$ with zero energy. Let $A$ be a ball of radius $r$ and $A(\ell)$ be the set of sites that are distance at most $\ell$ away from $A$. Let 
    \begin{equation}
        c_{\ell}(O_A) := \frac{\text{Tr}(Q_{A(\ell)} O_A)}{\text{Tr}(Q_{A(\ell)})}
    \end{equation}
    for any operator $O_A$ acting on $\mathcal{H}_A$. Then we say $H$ satisfies the condition of \emph{local topological quantum order} if there exists $\ell = O(1)$ independent of $r$, such that for all balls $A$ and operators $O_A$,
    \begin{equation}
      Q_{A(\ell)} O_A Q_{A(\ell)} =  c_{\ell}(O_A) Q_{A(\ell)}. \label{eq:ltqo_equation}
    \end{equation}
\end{definition}
\noindent
The original definition of LTQO allows Eq.~\eqref{eq:ltqo_equation} to be satisfied approximately. We shall be content with this simplified version because we can show Eq.~\eqref{eq:ltqo_equation} \emph{exactly} for $H$. In particular, if our version [Definition~\ref{definition:ltqo}] holds, the original version of LTQO holds as well.

Perhaps a more physical formulation is the following. 
\begin{definition}[Locally unique ground state]
\label{definition:local-uniqueness}
Consider a Hamiltonian $H$ as in Definition \ref{definition:ltqo}. Then we say $H$ has a locally unique ground state if, for some $\ell=O(1)$ independent of $r$, for all balls $A$ of any radius $r$, for all states $\rho_{A(\ell)}$ on enlarged region $A(\ell)$ that have zero energy on $A(\ell)$, the state
\begin{align} \label{eq:local-uniqueness}
    \tau_A := \Tr_{A(\ell) \backslash A} \rho_{A(\ell)} 
\end{align}
is independent of the state $\rho_{A(\ell)}$.  (We say that a state $\rho_R$ on region $R$ has zero energy if it is in the ground space of all the terms in $H$ supported on $R$.) 
\end{definition}
These formulations are equivalent.
\begin{proposition}[LTQO as local uniqueness] 
\label{prop:uniqueness}
For local, frustration-free Hamiltonians, the LTQO condition stated in Definition \ref{definition:ltqo} is equivalent to the local uniqueness condition of Definition
\ref{definition:local-uniqueness}.
\end{proposition}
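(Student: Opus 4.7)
The plan is to prove both directions by relating the image of the support projector $Q_{A(\ell)}$ to the set of zero-energy states on $A(\ell)$, then translating each condition into an equivalent statement about matrix elements of $O_A$ between ground states.

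First I would establish the key dictionary: a state $\rho_{A(\ell)}$ has zero energy on $A(\ell)$ if and only if $Q_{A(\ell)} \rho_{A(\ell)} Q_{A(\ell)} = \rho_{A(\ell)}$, equivalently $\mathrm{supp}(\rho_{A(\ell)}) \subset \mathrm{Im}(Q_{A(\ell)})$. This is immediate: for each $X \subset A(\ell)$, frustration-freeness means $\mathrm{Tr}((1-Q_X)\rho_{A(\ell)}) = 0$, which forces $\mathrm{supp}(\rho_{A(\ell)}) \subset \mathrm{Im}(Q_X)$; intersecting over all such $X$ yields $\mathrm{Im}(Q_{A(\ell)})$ by definition.

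For the LTQO $\Rightarrow$ local uniqueness direction, I would take any zero-energy $\rho_{A(\ell)}$ and any operator $O_A$, and compute
\begin{equation}
    \mathrm{Tr}(O_A \tau_A) = \mathrm{Tr}(O_A \rho_{A(\ell)}) = \mathrm{Tr}\bigl(Q_{A(\ell)} O_A Q_{A(\ell)} \rho_{A(\ell)}\bigr) = c_\ell(O_A) \, \mathrm{Tr}(\rho_{A(\ell)}) = c_\ell(O_A),
\end{equation}
where the second equality uses the dictionary and the third uses Eq.~\eqref{eq:ltqo_equation}. Since every expectation value of an operator on $A$ against $\tau_A$ is fixed by $c_\ell(O_A)$, which depends only on $Q_{A(\ell)}$, the state $\tau_A$ is uniquely determined.

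For the converse, I would show $Q_{A(\ell)} O_A Q_{A(\ell)}$ is a scalar multiple of $Q_{A(\ell)}$ by controlling all its matrix elements between vectors in $\mathrm{Im}(Q_{A(\ell)})$. Diagonal elements $\langle \psi | O_A | \psi \rangle = \mathrm{Tr}(\tau_A O_A)$ are immediately fixed by local uniqueness applied to $\rho = |\psi\rangle\langle\psi|$. The main step is recovering off-diagonal elements $\langle \phi | O_A | \psi \rangle$, which I would do by polarization: applying local uniqueness to the (normalized) pure ground states proportional to $|\psi\rangle + |\phi\rangle$ and $|\psi\rangle + i|\phi\rangle$, both of which remain in $\mathrm{Im}(Q_{A(\ell)})$ since it is a linear subspace, and then subtracting the already-known diagonal contributions to extract $\langle \phi | O_A | \psi \rangle = \langle \phi | \psi \rangle \, \mathrm{Tr}(\tau_A O_A)$. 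This gives $Q_{A(\ell)} O_A Q_{A(\ell)} = \mathrm{Tr}(\tau_A O_A) \, Q_{A(\ell)}$, and taking the trace of both sides identifies the scalar with $c_\ell(O_A)$. The only subtlety is the polarization manipulation; everything else is straightforward bookkeeping with the support projector, so I do not anticipate any serious obstacle.
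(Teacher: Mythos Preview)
Your proposal is correct and follows essentially the same approach as the paper. For the nontrivial direction (local uniqueness $\Rightarrow$ LTQO) both you and the paper use polarization to control off-diagonal matrix elements of $O_A$ in $\mathrm{Im}(Q_{A(\ell)})$; the only cosmetic difference is that the paper works with orthogonal pairs $|\Psi_1\rangle,|\Psi_2\rangle$ (concluding the off-diagonals vanish), whereas you allow general $|\psi\rangle,|\phi\rangle$ and obtain $\langle\phi|O_A|\psi\rangle=\langle\phi|\psi\rangle\,\mathrm{Tr}(\tau_A O_A)$, which is the same statement. You also spell out the easy direction and the dictionary between zero-energy density matrices and $\mathrm{Im}(Q_{A(\ell)})$, which the paper omits as straightforward.
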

The equivalence is known and straightforward.  For completeness we include a proof of the less direct implication, that local uniqueness implies LTQO.
\begin{proof}
We assume $H$ satisfies \eqref{eq:local-uniqueness} and show \eqref{eq:ltqo_equation}.
By \eqref{eq:local-uniqueness}, for any $|\Psi\rangle, |\Psi'\rangle \in \ker (I-Q_{A(\ell)})$ for a sufficiently large but constant $\ell$, 
\begin{equation}
    \langle \Psi| O_A |\Psi\rangle
    =
    \langle \Psi' | O_A |\Psi'\rangle.
\end{equation}
Choosing $|\Psi\rangle = \frac{1}{\sqrt{2}}(|\Psi_1\rangle + |\Psi_2\rangle)$ and $|\Psi'\rangle = \frac{1}{\sqrt{2}}(|\Psi_1\rangle - |\Psi_2\rangle)$ for an orthogonal pair of states $|\Psi_1\rangle$ and $|\Psi_2\rangle$, we can conclude that
\begin{equation}
    \langle \Psi_1|O_A |\Psi_2\rangle + \langle \Psi_2| O_A |\Psi_1\rangle=0.
\end{equation}
Choosing $|\Psi\rangle = \frac{1}{\sqrt{2}}(|\Psi_1\rangle + i|\Psi_2\rangle)$ and $|\Psi'\rangle = \frac{1}{\sqrt{2}}(|\Psi_1\rangle - i|\Psi_2\rangle)$, we can conclude that
\begin{equation}
    \langle \Psi_1|O_A |\Psi_2\rangle - \langle \Psi_2| O_A |\Psi_1\rangle=0,
\end{equation}
implying $\langle \Psi_1| O_A |\Psi_2\rangle = \langle \Psi_2|O_A |\Psi_1\rangle=0$. Therefore, the off-diagonal elements of $O_A$ in $\ker (I- Q_{A(\ell)})$ are identically zero. This immediately implies Eq.~\eqref{eq:ltqo_equation}.
\end{proof}

Now we turn to constructing a parent Hamiltonian for the reference state.  We consider a Hamiltonian of the following form,
\begin{equation}
    H = \sum_{X\in \mathcal{S}} (I-P_X), \label{eq:hamiltonian_global}
\end{equation}
where the elements of $\mathcal{S}$ are disks of constant size and
\begin{align}
P_X=\text{proj}_{\text{supp}(\rho_X)}
\end{align} 
is the projector onto the support of $\sigma_X$. 
The main result of this section will be
\begin{proposition}
\label{proposition:ltqo}[Parent Hamiltonian satisfies LTQO]
    Assume the Hamiltonian defined in Eq.~\eqref{eq:hamiltonian_global} uses a collection of regions $\mathcal{S}$ such that, for every elementary disk $D$, there exists a disk $X\in \mathcal{S}$ such that $D\cup N(D)\subset X$.  Then $H$ satisfies LTQO [Definition~\ref{definition:ltqo}].
\end{proposition}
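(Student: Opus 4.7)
The plan is to apply Proposition~\ref{prop:uniqueness} and verify the local uniqueness condition of Definition~\ref{definition:local-uniqueness}. Concretely, I want to exhibit a constant $\ell = O(1)$ such that every zero-energy state $\rho$ of $H$ on $A(\ell)$ satisfies $\rho_A = \sigma_A$. Because the reference state is itself zero-energy for $H$ (each $\sigma_X$ lies in $\operatorname{supp}(\sigma_X)$), this single equality forces all zero-energy marginals on $A$ to coincide. The constant $\ell$ will depend only on the uniform diameter bound on regions in $\mathcal{S}$ and on the $O(1)$ reach of the extension algorithm of Section~\ref{sec:extensions_of_axioms}.

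The first step is structural. For every elementary disk $D$ sitting sufficiently far inside $A(\ell)$, the hypothesis supplies an $X \in \mathcal{S}$ with $D \cup N(D) \subset X$, and the boundedness of $\mathcal{S}$ ensures $X \subset A(\ell)$. The zero-energy equation $(I - P_X)\rho = 0$ then yields
\[
\operatorname{supp}(\rho_{D \cup N(D)}) \subseteq \operatorname{supp}(\sigma_{D \cup N(D)}).
\]
Next, pick a disk $B$ with $A \subseteq B$ and $B \cup N(B) \subseteq A(\ell)$. By Sections~\ref{sec:extending_a0}--\ref{sec:extending_a1}, $\sigma$ satisfies extended \textbf{A0} and \textbf{A1} on $B$ and on every sub-disk of $B \cup N(B)$; in particular, in a purification we obtain the Markov condition $I(B : E \mid N(B))_\sigma = 0$, yielding the Hayden--Jozsa--Petz--Winter block decomposition of $\sigma_{B \cup N(B)}$ together with an explicit Petz recovery map sending $\sigma_{N(B)}$ back to $\sigma_{B \cup N(B)}$.

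The final step is to use these support containments, combined with the Markov structure, to conclude $\rho_B = \sigma_B$. I expect this to be the main obstacle. A single support containment on $D \cup N(D)$ is not by itself rigid: under the HJPW decomposition of $\sigma_{D \cup N(D)}$, an arbitrary density matrix supported on $\operatorname{supp}(\sigma_{D \cup N(D)})$ may restrict to $\rho_D = \sum_j q_j \sigma_D^{(j)}$ with weights $q_j$ differing from the $\sigma$-weights. What removes this residual freedom is the simultaneous support containment on the overlapping family of elementary disks filling $B \cup N(B)$: iterating along a growing sequence of sub-disks in the spirit of Section~\ref{sec:extending_a0}, at each new face the Petz recovery coming from \textbf{A0} admits only a single extension of the already-pinned marginal consistent with the next support condition, namely the $\sigma$-marginal itself. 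This is the standard information-convex-set / isomorphism-theorem logic from the entanglement bootstrap~\cite{shi2020fusion}, specialised to our hypothesis of support containment rather than marginal equality, and it yields $\rho_B = \sigma_B$ and hence $\rho_A = \sigma_A$. The merging step is $O(1)$ in range at each face, so the buffer $\ell$ stays a uniform constant independent of the radius of $A$.
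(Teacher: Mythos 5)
Your skeleton matches the paper's strategy (reduce to local uniqueness via Proposition~\ref{prop:uniqueness}, extract support containment on elementary disks from the zero-energy condition, then glue outward), but the step you yourself flag as ``the main obstacle'' is precisely the step left unproved, and the mechanism you sketch for it is not the one that does the work. The paper closes it with Lemma~\ref{lemma:a0_consequence}: the extended axiom \textbf{A0} for an elementary disk $D$ states $\left(S(D)+S(D\cup N(D))-S(N(D))\right)_\sigma=0$, i.e.\ $D$ is decoupled from the purifying system of $\sigma_{D\cup N(D)}$; Uhlmann's theorem then factorizes the purification through an isometry into $N(D)$, and as a consequence \emph{every} density matrix supported in $\operatorname{supp}(\sigma_{D\cup N(D)})$ has $D$-marginal exactly $\sigma_D$. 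In particular the weight freedom $\rho_D=\sum_j q_j \sigma_D^{(j)}$ you worry about simply does not exist for the inner region: the residual freedom permitted by support containment lives only on the shell $N(D)$ and is traced out. Your proposal never establishes this base case of exact equality; it defers it to ``the standard information-convex-set / isomorphism-theorem logic,'' which is circular as written, because the iteration you describe needs some region on which $\rho=\sigma$ holds exactly before anything can be ``already pinned.''

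The gluing step also needs an ingredient you do not supply. To conclude that the one-face extension of a pinned marginal is unique --- whether phrased via Petz recovery or via Lemma~\ref{lemma:markov_local_to_global} --- you must know that the zero-energy state $\rho$ itself, not only $\sigma$, is a quantum Markov chain for each tripartition $\left(f,\, D_i\cap N(f),\, D_i\setminus N(f)\right)$ used in the growth; otherwise $\rho_{fD_i}$ need not be determined by its marginals. The paper obtains this from the already-established equality on elementary disks: $I(f:D_i\setminus N(f)\,|\,D_i\cap N(f))_\rho$ is upper bounded by $\left(S(f|D_i\cap N(f))+S(f|N(f)\setminus D_i)\right)_\rho$, a combination involving $\rho$ only on subsets of the elementary disk $f\cup N(f)$, where $\rho=\sigma$, and this vanishes by the extended \textbf{A1}. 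With these two repairs (Lemma~\ref{lemma:a0_consequence} as the rigidity/base-case step, and the entropic bound transferring the Markov property from $\sigma$ to $\rho$ during the induction) your outline becomes the paper's proof; without them the argument does not go through.
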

One can think of the condition above as requiring that the regions $X$ cover the plane with sufficient buffer. While this condition is sufficient for LTQO, later we will make further specifications of $\mathcal{S}$.  With these further properties, the resulting Hamiltonian will be frustration-free and commuting.

Note ``support'' is a word used in two different ways: (i) the support of an operator is the subsystem $X$ that the operator acts on, and (ii) the support of a density matrix $\sigma_X$ is the orthogonal complement of the kernel, as a linear subspace of the Hilbert space on $X$. Since $\sigma$ lives in the ground state subspace by definition and $H\geq 0$, the ground state energy is zero. 

To prove Proposition \ref{proposition:ltqo}, we will show the local uniqueness property of Definition \ref{definition:local-uniqueness}, which is equivalent to LTQO [Definition \ref{definition:ltqo}] via Proposition \ref{prop:uniqueness}.  We begin with the following simple fact~\cite{shi2020fusion}.
\begin{lemma}
\label{lemma:a0_consequence}
    Let $\rho_{BC}$ be a density matrix such that $\left( S(C) + S(B|C)\right)_{\rho}=0$. For any density matrix $\tau_{BC}$ in the support of $\rho_{BC}$, $\tau_C = \rho_C$.
\end{lemma}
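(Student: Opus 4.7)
The plan is to read the hypothesis as the saturation of an entropy inequality, decode that saturation into a product structure on $\rho_{BC}$, and then read off the conclusion by inspection. I interpret the written hypothesis $\bigl(S(C)+S(B|C)\bigr)_{\rho}=0$ as the \textbf{A0}-type condition $\bigl(S(C)+S(C|B)\bigr)_{\rho}=0$ of Eq.~\eqref{eq:a0_starting_point} (otherwise the lemma reduces to the trivial statement that $\rho_{BC}$ is pure); equivalently, $S(B)_\rho - S(BC)_\rho = S(C)_\rho$.

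First, I would introduce a purifying system $E$ with $|\Psi\rangle_{BCE}$ purifying $\rho_{BC}$. Using the standard identities $S(B)_\rho=S(CE)_\Psi$ and $S(BC)_\rho=S(E)_\Psi$, the hypothesis becomes
\begin{equation}
S(CE)_\Psi = S(C)_\Psi + S(E)_\Psi,
\end{equation}
which is saturation of subadditivity on $CE$ and hence forces $\rho_{CE}=\rho_C\otimes\rho_E$.

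Next, I would invoke the structural fact that purifications of a product state factorize: any purification of $\rho_C\otimes\rho_E$ is, up to a partial isometry on the purifying register, a tensor product of a purification of $\rho_C$ with one of $\rho_E$. Absorbing this isometry into the identification of $B$, I may write $B=B_1 B_2$ together with
\begin{equation}
|\Psi\rangle_{BCE} = |\phi\rangle_{B_1 C}\otimes|\psi\rangle_{B_2 E},\qquad \rho_{BC} = |\phi\rangle\langle\phi|_{B_1 C}\otimes \rho_{B_2},
\end{equation}
where $|\phi\rangle$ purifies $\rho_C$ and $\rho_{B_2}=\mathrm{Tr}_E|\psi\rangle\langle\psi|_{B_2 E}$. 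The conclusion is then immediate: since $\mathrm{supp}(\rho_{BC})\subseteq \mathbb{C}|\phi\rangle_{B_1 C}\otimes\mathrm{supp}(\rho_{B_2})$, any density matrix $\tau_{BC}$ with $\mathrm{supp}(\tau_{BC})\subseteq\mathrm{supp}(\rho_{BC})$ has the form $|\phi\rangle\langle\phi|_{B_1 C}\otimes\tau_{B_2}$, and tracing out $B$ gives $\tau_C=\mathrm{Tr}_{B_1}|\phi\rangle\langle\phi|_{B_1 C}=\rho_C$.

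The only real obstacle is the structural step identifying the product form of the purification; this is a well-known consequence of the Schmidt decomposition together with uniqueness-up-to-isometry of purifications, which I would invoke by citation rather than rederive. Everything else is routine manipulation of von Neumann entropies.
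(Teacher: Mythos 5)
Your proof is correct and follows essentially the same route as the paper's: purify $\rho_{BC}$, observe that the hypothesis (read as $\left(S(C)+S(C|B)\right)_\rho=0$, which is indeed what the paper's own proof uses despite the $S(B|C)$ typo in the statement) means the purifying system is in product with $C$, then invoke uniqueness of purifications (Uhlmann) to factor the global pure state through an isometry into $B$ and read off $\tau_C=\rho_C$. No gaps.
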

\begin{proof}
Note that $I(A:C)_{\psi_{\rho}}=0$, where $A$ is the purifying system. Here $\psi_{\rho} := {|\psi_{\rho}\rangle}_{ABC}{\langle \psi_{\rho}|}$ for some state vector $|\psi_{\rho}\rangle_{ABC}$. Thus $\rho_{AC} = \rho_A\otimes \rho_C$. By Uhlmann's theorem~\cite{Uhlmann1976}, this implies that there exists an isometry $V_{B_LB_R\to B}: \mathcal{H}_{B_L}\otimes \mathcal{H}_{B_R} \to \mathcal{H}_B $, where $B_L$ and $B_R$ are auxiliary systems purifying $\rho_A$ and $\rho_C$,
\begin{equation}
    |\psi_{\rho}\rangle_{ABC} = V_{B_LB_R \to B} |\psi_1\rangle_{AB_L} \otimes |\psi_2\rangle_{B_RC}.
\end{equation}
Thus the density matrices in the support of $\rho_{BC}$ must be of the following form:
\begin{equation}
    V_{ B_LB_R \to B}\left(\lambda_{B_L} \otimes |\psi_2\rangle_{B_RC}\langle\psi_2| \right) V_{B_LB_R \to B}^{\dagger}, \label{eq:tau_canonical_form}
\end{equation}
where $\lambda_{B_L}$ is a density matrix acting on the support of $\text{Tr}_A( |\psi_1\rangle_{AB_L}\langle\psi_1|)$. Upon tracing out $B$ in Eq.~\eqref{eq:tau_canonical_form}, we obtain a density matrix $\text{Tr}_{B_R}(|\psi_2\rangle_{B_RC}\langle\psi_2|)$, which is precisely $\rho_C$.
\end{proof}

Using Lemma~\ref{lemma:a0_consequence}, we can conclude that \emph{any} element in $\ker H$, restricted to any elementary disk, ought to be exactly the marginal of $\sigma$ defined on the same elementary disk. To see why, recall that we assumed that for any elementary disk $D$, $D\cup N(D)$ is included in some $X\subset \mathcal{S}$. Therefore, any element in $\ker H$, restricted to $D\cup N(D)$, must be supported on the support of $\sigma_{D\cup N(D)}$. Setting $C=D$ and $B=N(D)$, we recall that [Section~\ref{sec:extending_a0}]
\begin{equation}
    \left(S(C) + S(B|C) \right)_{\sigma}=0.
\end{equation}
By Lemma~\ref{lemma:a0_consequence}, $\rho_C = \sigma_C$, where $C=D$ is an elementary disk. Thus for any elementary disk $D$, for any density matrix $\rho$ supported in $\ker H$,
\begin{equation}
   \rho_D= \sigma_D.
\end{equation}

This argument can be easily generalized to a Hamiltonian restricted to any finite disk. Consider a disk, denoted as $D\subset \Lambda$. Define
\begin{equation}
    H_{D} := \sum_{\substack{X\in \mathcal{S},\\X \subset D}} (I-P_X).
\end{equation}
We claim that any element in $\ker H_D$, restricted to any elementary disk whose distance from $\partial D$ is larger than some constant of order unity, is equal to the reduced density matrix of $\sigma$ on the same elementary disk. Recall that $X\subset \mathcal{S}$ is a disk of constant size. Let $r$ be the maximal radius of $X$. For any elementary disk $u$ that has a distance (to $\partial D$) larger than $2r+1$, $u\cup N(u)$ cannot be included in $X\subset \mathcal{S}$ that lies outside of $D$. On the other hand, by our construction, $u\cup N(u)$ ought to be a subset of some $X\subset \mathcal{S}$. Therefore, for any elementary disk $u$ whose distance from $\partial D$ is larger than $2r+1$, there ought to be a $X\subset \mathcal{S}$ such that $u\cup N(u)\subset X$. Again using Lemma~\ref{lemma:a0_consequence}, the reduced density matrix of any element of $\ker H_D$ over $u$ is exactly $\sigma_D$.

Therefore, for any disk $D$, for any $D'\subset D$ separated from $\partial D$ by at least some positive distance, the following property holds. Without loss of generality, consider an elementary disk associated with a face in $D'$. The reduced density matrix over this elementary disk (obtained from an element in $\ker H_D$) is indistinguishable from the reduced density matrix of $\sigma$ over the same elementary disk. 

In fact, more can be said. Let $\rho_{D'}$ be a reduced density matrix of an element of $\ker H_D$ over $D'$. We claim
\begin{equation}
    \rho_{D'} = \sigma_{D'}. \label{eq:ltqo_identity}
\end{equation}
The proof is based on induction. Let $D_i\subset D'$ be a disk and suppose $\rho_{D_i} = \sigma_{D_i}$. Consider a face $f\in D'$, where $D_i f$ is a disk. From SSA, it follows that:
\begin{equation}
    \begin{aligned}
        I(f:D_i \setminus N(f)|D_i\cap N(f))_{\sigma} &\leq  \left(S(f|D_i\cap N(f))  + S(f|N(f)\setminus D_i)\right)_{\sigma}, \\
        I(f:D_i \setminus N(f)|D_i\cap N(f))_{\rho} &\leq  \left(S(f|D_i\cap N(f))  + S(f|N(f)\setminus D_i)\right)_{\rho}.
    \end{aligned}
\end{equation}
Both $\rho$ and $\sigma$ are indistinguishable over $D_i$ and $f\cup (N(f)\cap D_i)$ by assumption. Since two locally indistinguishable Markov chains are globally indistinguishable [Lemma~\ref{lemma:markov_local_to_global}], $\rho_{f D_i} = \sigma_{fD_i}$. This argument can be repeated until the enlarged disk becomes $D'.$ This completes the proof of  Proposition \ref{proposition:ltqo}.  

We conclude with the lemma used above,
\begin{lemma}
    Let $\rho_{ABC}$ and $\sigma_{ABC}$ be quantum Markov chains. If $\rho_{AB} = \sigma_{AB}$ and $\rho_{BC} = \sigma_{BC}$, $\rho_{ABC} = \sigma_{ABC}$.
    \label{lemma:markov_local_to_global}
\end{lemma}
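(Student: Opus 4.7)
The plan is to invoke the Petz recovery map / Hayden--Jozsa--Petz--Winter structure theorem for quantum Markov chains. Recall that a tripartite state $\rho_{ABC}$ satisfies $I(A:C|B)_\rho = 0$ if and only if it can be perfectly recovered from its marginal $\rho_{AB}$ by acting only on $B$; explicitly, the Petz recovery map
\begin{equation}
    \mathcal{R}^\rho_{B \to BC}(X_{AB}) := \rho_{BC}^{1/2}\, \rho_B^{-1/2}\, X_{AB}\, \rho_B^{-1/2}\, \rho_{BC}^{1/2}
\end{equation}
(where the inverses are taken on the support of $\rho_B$) satisfies $\rho_{ABC} = \mathcal{R}^\rho_{B \to BC}(\rho_{AB})$. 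Crucially, this map depends only on the marginal $\rho_{BC}$, not on the full state $\rho_{ABC}$.

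Now I would simply note that because $\rho_{BC} = \sigma_{BC}$, also $\rho_B = \sigma_B$, and hence the two Petz recovery maps agree: $\mathcal{R}^\rho_{B \to BC} = \mathcal{R}^\sigma_{B \to BC}$. Applying this common map to the common marginal $\rho_{AB} = \sigma_{AB}$ yields
\begin{equation}
    \rho_{ABC} \;=\; \mathcal{R}^\rho_{B \to BC}(\rho_{AB}) \;=\; \mathcal{R}^\sigma_{B \to BC}(\sigma_{AB}) \;=\; \sigma_{ABC}.
\end{equation}

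The only subtlety, and the step requiring the most care, is that $\rho_B$ and $\rho_{BC}$ may be rank-deficient, so one must interpret $\rho_B^{-1/2}$ and $\rho_{BC}^{1/2}$ as operators on appropriate support subspaces. This is handled by the standard statement of the recovery theorem (originally due to Petz, sharpened by Hayden--Jozsa--Petz--Winter): one uses the structural decomposition $\mathcal{H}_B = \bigoplus_j \mathcal{H}_{b_j^L} \otimes \mathcal{H}_{b_j^R}$ on $\mathrm{supp}(\rho_B)$ for which $\rho_{ABC} = \bigoplus_j p_j\, \rho_{A b_j^L} \otimes \rho_{b_j^R C}$. Because $\rho_{AB} = \sigma_{AB}$ and $\rho_{BC} = \sigma_{BC}$ in particular force $\rho_B = \sigma_B$, the two states share the same decomposition of $\mathrm{supp}(\rho_B)$ and the same weights $p_j$, so the Petz reconstruction is unambiguous and identical for the two states. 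Alternatively, one may give a fully self-contained proof by working directly with that block decomposition: each block $p_j \rho_{A b_j^L} \otimes \rho_{b_j^R C}$ is determined by the two given marginals, which completes the proof.
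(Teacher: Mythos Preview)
Your proof is correct, but it takes a different route from the paper's argument.

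The paper gives a short entropy argument (attributed to Kim): set $\tau_{ABC}=\tfrac{1}{2}(\rho_{ABC}+\sigma_{ABC})$ and observe that, since $\tau_{AB}=\rho_{AB}=\sigma_{AB}$ and $\tau_{BC}=\rho_{BC}=\sigma_{BC}$, one has
\[
S(\tau_{ABC}) - \tfrac{1}{2}\bigl(S(\rho_{ABC})+S(\sigma_{ABC})\bigr) \;=\; \tfrac{1}{2}\bigl(I(A{:}C|B)_\rho+I(A{:}C|B)_\sigma\bigr) - I(A{:}C|B)_\tau \;\le\; 0,
\]
using only SSA for $\tau$ and the Markov assumption for $\rho,\sigma$. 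Strict concavity of the von Neumann entropy then forces $\rho_{ABC}=\sigma_{ABC}$. This uses nothing beyond SSA and strict concavity, so it is entirely self-contained within the entropy machinery already set up in the paper.

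Your argument instead invokes the Petz recovery characterization: each Markov chain is exactly reconstructed from its $AB$-marginal by a CPTP map depending only on the $BC$-marginal, so equal marginals give equal global states. This is arguably more conceptually transparent---it explains \emph{how} the marginals determine the state rather than just that they do---and it connects naturally to the Hayden--Jozsa--Petz--Winter structure theorem the paper already quotes elsewhere. The cost is that you import a heavier structural result, and you correctly flag the support/rank-deficiency issue (which is handled by the standard statement, as you note). Either proof is fine; the paper's is lighter on prerequisites, yours is more constructive.
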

\begin{proof}
The following proof is based on Ref.~\cite{Kim2014informational}. Let $\tau_{ABC} =\frac{1}{2}(\rho_{ABC} + \sigma_{ABC})$. 
\begin{equation}
    S(\tau_{ABC}) - \frac{1}{2} (S(\rho_{ABC}) + S(\sigma_{ABC})) \leq \frac{1}{2}(I(A:C|B)_{\rho} + I(A:C|B)_{\sigma}).
\end{equation}
The right hand side is zero by our assumption. By the strict concavity of the von Neumann entropy, the left hand side must be zero. Moreover, this is possible if and only if $\rho_{ABC} = \sigma_{ABC}$.
\end{proof}

\section{Commuting projectors from quantum Markov chain}
\label{sec:commuting_projectors}

The Hamiltonian defined in Eq.~\eqref{eq:hamiltonian_global} satisfies the LTQO condition [Proposition~\ref{proposition:ltqo}]. However, in order to use the stability of the gap proved in Ref.~\cite{Michalakis2013}, one needs an additional ingredient: local gap condition. What this means is that $H_X$, for any disk $X$, has a spectral gap bounded from below between the ground state subspace of $H_X$ and the rest of the spectrum. The purpose of this Section is to develop techniques that can establish such a result.

The main result of this Section is Proposition~\ref{proposition:commutation_markov}. This proposition states that, if $\rho_{ABC}$ is a quantum Markov chain, the projectors onto the support of $\rho_{AB}$ and $\rho_{BC}$ commute.

\begin{theorem}
\cite{Hayden2004} A tripartite quantum state $\rho_{ABC}$, defined on a finite-dimensional Hilbert space $\mathcal{H}_A\otimes \mathcal{H}_B \otimes \mathcal{H}_C$, satisfies $I(A:C|B)_{\rho}=0$ if and only if there is a decomposition of $B$ as 
\begin{equation}
    \mathcal{H}_B = \bigoplus_{j} \mathcal{H}_{b_j^L} \otimes \mathcal{H}_{b_j^R}
\end{equation}
such that 
\begin{equation}
    \rho_{ABC} = \bigoplus_j q_j \rho_{Ab_j^L} \otimes \rho_{b_j^R C} \label{eq:quantum_markov_chain_decomposition}
\end{equation}
with states $\rho_{Ab_j^L}$ in $\mathcal{H}_{Ab_j^L}$ and $\rho_{b_j^R C}$ in $\mathcal{H}_{b_j^RC}$, and a probability distribution $\{q_j \}$.
\end{theorem}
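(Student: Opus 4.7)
The plan is to prove both directions separately, with the ``if'' direction being a bookkeeping calculation and the ``only if'' direction carrying the substance. For the easy direction, assuming the block decomposition \eqref{eq:quantum_markov_chain_decomposition}, I would compute each of $S(B),S(AB),S(BC),S(ABC)$ via the additivity rules $S\bigl(\bigoplus_j q_j\sigma_j\bigr)=H(q)+\sum_j q_j S(\sigma_j)$ and additivity over tensor products. Each entropy picks up the same Shannon term $H(q)$, and the remaining contributions on $A$, $C$, and the $b_j^L,b_j^R$ factors cancel pairwise when combined as $S(AB)+S(BC)-S(B)-S(ABC)$, so $I(A:C|B)_\rho=0$.

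For the hard direction, I would first restrict to the support of $\rho_B$ on the $B$ factor so that $\rho_B$ is effectively full-rank; the complementary subspace of $\mathcal H_B$ plays no role and is simply trimmed. The next step is to invoke Petz's characterization of saturation of strong subadditivity: $I(A:C|B)_\rho=0$ is equivalent to the operator identity
\begin{equation}
\rho_{ABC}\;=\;\rho_{AB}^{1/2}\,\rho_B^{-1/2}\,\rho_{BC}\,\rho_B^{-1/2}\,\rho_{AB}^{1/2},
\end{equation}
with implicit identities inserted to view operators on $B$ as acting on $AB$ or $BC$. Equivalently, $\log\rho_{ABC}+\log\rho_B=\log\rho_{AB}+\log\rho_{BC}$ on the joint support. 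This operator equation will be my workhorse.

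From here I would extract the block structure as follows. Consider the finite-dimensional $C^*$-algebra $\mathcal M\subset\mathcal B(\mathcal H_B)$ generated by operators of the form $\mathrm{Tr}_A\bigl(\rho_{AB}^{it}(X_A\otimes I_B)\rho_{AB}^{-it}\bigr)$ for $X_A\in\mathcal B(\mathcal H_A)$ and $t\in\mathbb R$. By the Artin--Wedderburn structure theorem, its commutant in $\mathcal B(\mathcal H_B)$ induces a canonical decomposition $\mathcal H_B=\bigoplus_j\mathcal H_{b_j^L}\otimes\mathcal H_{b_j^R}$ in which every element of $\mathcal M$ acts as $A_j^L\otimes I_{b_j^R}$ within block $j$. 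I would then argue that the Petz identity propagates this factorization: $\rho_{AB}$ must act trivially on every $b_j^R$ factor, and the Markov equation then forces $\rho_{BC}$ to act trivially on every $b_j^L$ factor. Reading off the weights $q_j$ as the mass that $\rho_B$ assigns to each block, and applying the Petz recovery formula block by block, yields exactly the form \eqref{eq:quantum_markov_chain_decomposition}.

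The main obstacle is precisely this extraction step: turning the operator identity into the \emph{simultaneous} block decomposition and tensor factorization, with $\rho_{AB}$ attached to the left factor and $\rho_{BC}$ to the right. Concretely, the hard part is showing that the ``left-right'' split of each block agrees between the $\rho_{AB}$ and $\rho_{BC}$ sides --- a priori they could split $\mathcal H_{b_j}$ along different tensor factorizations. I expect the cleanest way to pin this down is via a modular-style calculation: use the Petz identity to show $[\rho_{AB}\otimes I_C,\,I_A\otimes\rho_{BC}]=0$ modulo operators supported on $B$ alone, and then use $\rho_B^{it}$ conjugations to line up the two decompositions. A more hands-on alternative is to diagonalize $\rho_B=\sum_b p_b|b\rangle\langle b|$ and track how the operator identity acts on each eigenvector of $\rho_B$; but either way the technical core of the proof lies in establishing the compatibility of the two tensor structures within each isotypic component.
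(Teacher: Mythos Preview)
The paper does not prove this theorem at all: it is stated with the citation \cite{Hayden2004} and used as a black box, with no proof environment following it. So there is nothing to compare your proposal against in the paper itself.

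As for your sketch on its own merits: the overall architecture is the standard one from Hayden--Jozsa--Petz--Winter, and you have correctly identified both the easy direction and the genuine crux of the hard direction. A few comments. First, the algebra you want is slightly different from what you wrote: rather than generating $\mathcal M$ from $\mathrm{Tr}_A\bigl(\rho_{AB}^{it}(X_A\otimes I_B)\rho_{AB}^{-it}\bigr)$, the cleaner object is the smallest von Neumann subalgebra of $\mathcal B(\mathcal H_B)$ that contains all $\mathrm{Tr}_A\bigl((X_A\otimes I_B)\rho_{AB}\bigr)\rho_B^{-1}$ and is invariant under the modular flow $\rho_B^{it}(\cdot)\rho_B^{-it}$; this is what Artin--Wedderburn then splits. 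Second, your statement that ``$\rho_{AB}$ acts trivially on every $b_j^R$ factor'' is not quite right as written --- what one actually shows is that $\rho_{AB}$ is block-diagonal with each block of the form $\rho_{Ab_j^L}\otimes\tau_{b_j^R}$ for some fixed state $\tau_{b_j^R}$, and likewise for $\rho_{BC}$ on the other side. Third, the compatibility issue you flag is real but is resolved not by showing $[\rho_{AB}\otimes I_C,I_A\otimes\rho_{BC}]=0$ (which is generally false), but rather by using the Petz identity to show that the algebra generated by the $\rho_{BC}$ side lies in the \emph{commutant} of $\mathcal M$, which is exactly what forces the two tensor splittings to align. With these adjustments your plan goes through.
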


\begin{lemma}
\label{lemma:lemma0}
Consider a bipartite Hilbert space $\mathcal{H}_{AB}=\mathcal{H}_A \otimes \mathcal{H}_B$ and let $\rho_{AB}$ be a state that acts on it. Let $P_{AB}$ and $P_B$ be the projector onto the support of $\rho_{AB}$ and $\rho_B$, respectively. 
\begin{equation}
    [P_B, P_{AB}]=0.
\end{equation}
\end{lemma}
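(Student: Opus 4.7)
The plan is to show the stronger statement $P_{AB}\le I_A\otimes P_B$ as orthogonal projections, from which the commutation relation $[I_A\otimes P_B,\,P_{AB}]=0$ (written as $[P_B,P_{AB}]=0$ under the usual embedding) follows immediately, because whenever one projector dominates another, the two commute (multiply the inequality $P_{AB}(I_A\otimes P_B)=P_{AB}$ by adjoints).

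To establish the inclusion $\text{supp}(\rho_{AB})\subseteq \mathcal{H}_A\otimes\text{supp}(\rho_B)$, I would diagonalize $\rho_{AB}=\sum_i p_i|\psi_i\rangle\langle\psi_i|$ with $p_i>0$, and then Schmidt-decompose each eigenvector as $|\psi_i\rangle=\sum_\alpha \lambda_{i\alpha}|a_{i\alpha}\rangle\otimes|b_{i\alpha}\rangle$ with strictly positive coefficients $\lambda_{i\alpha}$. Tracing over $A$ exhibits $\rho_B$ as a nonnegative combination of the rank-one operators $|b_{i\alpha}\rangle\langle b_{i\alpha}|$, so each $|b_{i\alpha}\rangle$ lies in $\text{supp}(\rho_B)$. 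Consequently each $|\psi_i\rangle$ lies in $\mathcal{H}_A\otimes \text{supp}(\rho_B)$, and taking the span of the eigenvectors gives the desired inclusion of supports.

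A slightly more abstract route, which avoids choosing bases, is to observe that for any state vector $|\phi\rangle\in\text{supp}(\rho_{AB})$ we have $|\phi\rangle\langle\phi|\le \lambda\rho_{AB}$ for some $\lambda>0$, hence $\text{Tr}_A(|\phi\rangle\langle\phi|)\le \lambda\rho_B$, which forces the $B$-marginal of $|\phi\rangle\langle\phi|$ to be supported on $\text{supp}(\rho_B)$; this in turn forces $|\phi\rangle$ itself to lie in $\mathcal{H}_A\otimes\text{supp}(\rho_B)$ by the Schmidt argument applied to this single vector.

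I do not expect any real obstacle here — the only subtlety worth flagging is the notational convention that $P_B$ in the statement means $I_A\otimes P_B$ when acting on $\mathcal{H}_{AB}$. The lemma is essentially a warm-up observation about the interaction of supports with partial trace, and it will serve as the base case for the more substantive Proposition~\ref{proposition:commutation_markov} about quantum Markov chains, where the block decomposition in Eq.~\eqref{eq:quantum_markov_chain_decomposition} should let us apply this lemma separately within each block $\mathcal{H}_{b_j^L}\otimes\mathcal{H}_{b_j^R}$ and then combine the results.
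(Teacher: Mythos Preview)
Your proposal is correct and follows essentially the same approach as the paper: both arguments hinge on the support inclusion $\mathrm{supp}(\rho_{AB})\subseteq \mathcal{H}_A\otimes\mathrm{supp}(\rho_B)$ and then deduce commutation from the fact that one projector dominates the other. The paper simply asserts the inclusion (``the latter is a subspace of the former'') and phrases the deduction as a case split over the image of $I_A\otimes P_B$ and its complement, whereas you supply explicit justifications via Schmidt decomposition (or the operator-inequality variant) and invoke the algebraic identity $P_{AB}(I_A\otimes P_B)=P_{AB}$; these are cosmetic differences, not a genuinely different route.
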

\begin{proof}
Consider the images of $I\otimes P_B$ and $P_{AB}$. The latter is a subspace of the former. Within the former subspace, $I\otimes P_B$ commutes with every element, and in particular, with $P_{AB}$. In the orthogonal complement, both operators annihilate every state, and as such, trivially commute.
\end{proof}

\begin{proposition}
\label{proposition:commutation_markov}
Let $\rho_{ABC}$ be a state defined on a finite dimensional Hilbert space $\mathcal{H}_A\otimes \mathcal{H}_B \otimes \mathcal{H}_C$ that satisfies $I(A:C|B)_{\rho}=0$. Let $P_{AB}$ and $P_{BC}$ be the projector onto the support of $\rho_{AB}$ and $\rho_{BC}$, respectively. Then 
\begin{equation}
    [P_{AB}, P_{BC}]=0.
\end{equation}
\end{proposition}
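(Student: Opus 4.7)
The plan is to invoke the Hayden et al.\ structure theorem (already quoted above) to exhibit an explicit block decomposition in which both projectors become block-diagonal, and then reduce commutation within each block to Lemma~\ref{lemma:lemma0}. The only nontrivial content beyond this decomposition is routine bookkeeping with direct sums and tensor products.

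First I would apply the structure theorem to write $\mathcal{H}_B = \bigoplus_j \mathcal{H}_{b_j^L} \otimes \mathcal{H}_{b_j^R}$ and $\rho_{ABC} = \bigoplus_j q_j\, \rho_{Ab_j^L} \otimes \rho_{b_j^R C}$. Tracing out $C$ and $A$ respectively gives
\begin{equation}
\rho_{AB} = \bigoplus_j q_j\, \rho_{Ab_j^L} \otimes \rho_{b_j^R}, \qquad
\rho_{BC} = \bigoplus_j q_j\, \rho_{b_j^L} \otimes \rho_{b_j^R C},
\end{equation}
where $\rho_{b_j^L} := \mathrm{Tr}_A\, \rho_{Ab_j^L}$ and $\rho_{b_j^R} := \mathrm{Tr}_C\, \rho_{b_j^R C}$. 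Because the support of a product state is the tensor product of the supports, the support projectors read
\begin{equation}
P_{AB} = \bigoplus_j P_{Ab_j^L} \otimes P_{b_j^R}, \qquad
P_{BC} = \bigoplus_j P_{b_j^L} \otimes P_{b_j^R C}.
\end{equation}
Viewed on the full space $\mathcal{H}_A\otimes \mathcal{H}_B\otimes \mathcal{H}_C$ (with implicit identity factors on the missing subsystems), both operators are block-diagonal with respect to the same index $j$, so it suffices to establish commutation within a single block.

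Within block $j$, factor the Hilbert space as $\mathcal{H}_A \otimes \mathcal{H}_{b_j^L} \otimes \mathcal{H}_{b_j^R} \otimes \mathcal{H}_C$. The two restricted projectors are $P_{Ab_j^L} \otimes P_{b_j^R} \otimes I_C$ and $I_A \otimes P_{b_j^L} \otimes P_{b_j^R C}$, which split into a piece on $Ab_j^L$ times a piece on $b_j^R C$. On $Ab_j^L$, the two pieces are $P_{Ab_j^L}$ and $I_A \otimes P_{b_j^L}$, and $P_{b_j^L}$ is precisely the support projector of the marginal of $\rho_{Ab_j^L}$ on $b_j^L$; hence they commute by Lemma~\ref{lemma:lemma0}. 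Symmetrically, on $b_j^R C$, the pieces $P_{b_j^R}\otimes I_C$ and $P_{b_j^R C}$ commute by the same lemma. Taking tensor products and summing over $j$ yields $[P_{AB},P_{BC}]=0$.

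The main (and essentially only) obstacle is keeping track of the direct-sum-of-tensor-products structure so that each projector is correctly identified as block-diagonal; once that is done, the argument is a two-line application of Lemma~\ref{lemma:lemma0} inside each block. No further entropy inequalities or deformation arguments are needed.
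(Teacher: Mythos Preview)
Your proof is correct and follows essentially the same approach as the paper: invoke the Hayden et al.\ structure theorem, write $P_{AB}$ and $P_{BC}$ as block-diagonal sums $\sum_j P_{Ab_j^L}\otimes P_{b_j^R}$ and $\sum_j P_{b_j^L}\otimes P_{b_j^R C}$, and then apply Lemma~\ref{lemma:lemma0} within each block. Your write-up is a bit more explicit about the identity tensor factors and the block-by-block reduction, but the argument is the same.
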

\begin{proof}
Consider the decomposition in Eq.~\eqref{eq:quantum_markov_chain_decomposition}. Let $J = \{ j: q_j \neq 0\}$. Define $P_{Ab_j^L}$ and $P_{b_j^RC}$ as projectors onto the support of $\rho_{Ab_j^L}$ and $\rho_{b_j^RC}$, respectively. Similarly, define $P_{b_j^L}$ and $P_{b_j^R}$ to be the projectors onto the support of $\rho_{b_j^L}$ and $\rho_{b_j^R}$, respectively. These projectors are related to $P_{AB}$ and $P_{BC}$ in the following way:
\begin{equation}
    \begin{aligned}
    P_{AB} = \sum_{j\in J} P_{Ab_j^L} \otimes P_{b_j^R}, \\
    P_{BC} = \sum_{j\in J} P_{b_j^L} \otimes P_{b_j^RC}.
    \end{aligned}
    \label{eq:projectors_decomposition}
\end{equation}
Since that $[P_{b_j^L}, P_{Ab_j^L}]=0$ and $[P_{b_j^R}, P_{Ab_j^R}]=0$, by Lemma~\ref{lemma:lemma0}, $[P_{AB}, P_{BC}]=0$. 
\end{proof}
\noindent
Using Proposition~\ref{proposition:commutation_markov}, with a judicious choice of subsystems used in defining the parent Hamiltonian [Eq.~\eqref{eq:hamiltonian_global}], one can show that all the terms in Eq.~\eqref{eq:hamiltonian_global} commute with each other. 

However, it is a priori not obvious whether ensuring such commutation relation \emph{while ensuring LTQO} is possible. We now turn to this issue in Section~\ref{sec:parent_hamiltonian_final}. An explicit construction that satisfies both shall be given, which establishes that there exists a local parent Hamiltonian for the reference state $\sigma$ which moreover has a stable spectral gap.

\section{Constructing the parent Hamiltonian}
\label{sec:parent_hamiltonian_final}

\begin{figure}[t]
    \centering
    \begin{tikzpicture}
        \node[] () at (0,0) {\includegraphics[width=0.75\textwidth]{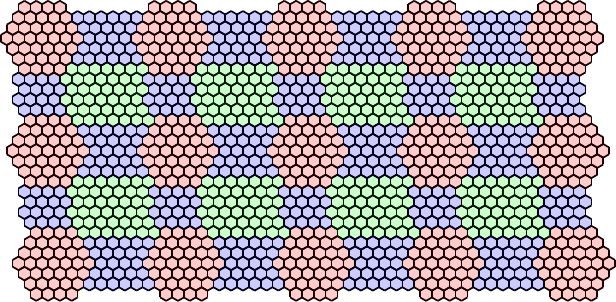}};
        \draw[dotted, line width=2pt]  (-7, 0) -- ++ (-0.4, 0);
        \draw[dotted, line width=2pt]  (7, 0) -- ++ (0.4, 0);
        \draw[dotted, line width=2pt]  (0, 3.5) -- ++ (0, 0.4);
        \draw[dotted, line width=2pt]  (0, -3.5) -- ++ (0, -0.4);
    \end{tikzpicture}
    \caption{Decomposition of the plane into $0$- $1$-, and $2$-cells, corresponding to the red, blue, and green balls.}\label{fig:coarse_grained_v2}
\end{figure}

In Section~\ref{sec:ltqo}, we showed that the Hamiltonian in Eq.~\eqref{eq:hamiltonian_global} satisfies the LTQO condition. The key assumption was that the set of subsystems $\mathcal{S}$ in Eq.~\eqref{eq:hamiltonian_global} is dense enough so that for every elementary disk $D$, there exists some $X\subset \mathcal{S}$ such that $D\cup N(D)\subset X$. For \emph{any} $\mathcal{S}$ that satisfies this condition, the LTQO condition follows. In Section~\ref{sec:commuting_projectors}, we established a sufficient condition under which two terms in the Hamiltonian [Eq.~\eqref{eq:hamiltonian_global}] commute. This is possible if for any $X, Y\subset \mathcal{S}$, there exists a set of subsystems $A, B,$ and $C$ such that $X=AB$, $Y=BC$, and $I(A:C|B)_{\sigma}=0$ [Proposition~\ref{proposition:commutation_markov}]. What is not obvious is whether one can choose the set $\mathcal{S}$ such that the LTQO condition is satisfied while ensuring that the local terms commute with each other. The main purpose of this Section is to show that this is possible by constructing an explicit example. (Such a set $\mathcal{S}$ is not unique, and we provide another valid choice in Appendix~\ref{appendix:another-H}.) The Hamiltonian defined this way will have, by Ref.~\cite{Michalakis2013}, a gap stable against any extensive but weak enough perturbations.

Our construction is based on a decomposition of a two-dimensional plane into $0-$, $1-$, and $2-$cells. We remark that, while we name these objects as cells of different dimensions, they are nevertheless all balls on the plane. In Fig.~\ref{fig:coarse_grained_v2}, the $0$-, $1$-, and $2$-cells are denoted in red, blue, and green, respectively. A crucial property that we insist is that each $i$-cell is separated by other $i$-cells by some nonzero distance. We then define a cell $c_i$, associated with a $2$-cell, in the following way. For a given $2$-cell, collect all the $0$- and $1$-cells that neighbor this $2$-cell, and take a union. We refer to this union as $c_i$, also colloquially referred to as a cell in this paper. Our Hamiltonian is defined as 
\begin{equation}
    H = \sum_{i: 2-\text{cells}} (I-P_{c_i}). \label{eq:local_hamiltonian_candidate}
\end{equation}

There are two important properties of our decomposition. First, this decomposition ensures that the local terms in Eq.~\eqref{eq:local_hamiltonian_candidate} commute. For any pair $(c_i, c_j)$, if $c_i$ and $c_j$ have trivial intersection, $P_{c_i}$ and $P_{c_j}$ trivially commute. When $c_i$ and $c_j$ do overlap, they overlap in such a way that the following condition is satisfied:
\begin{equation}
    I(c_i \setminus c_j: c_j\setminus c_i| c_i \cap c_j)_{\sigma}=0. \label{eq:conditional_independence_for_commutation}
\end{equation}
(We shall prove this in Section~\ref{sec:conditional_independence_for_commutation}.) Setting $AB= c_i$ and $BC= c_j$, we can invoke Proposition~\ref{proposition:commutation_markov}, concluding that the projectors $P_{c_i}$ and $P_{c_j}$ commute. Thus, Eq.~\eqref{eq:local_hamiltonian_candidate} is a sum of local commuting projectors.

Second, for every elementary disk $D$, there exists at least one $c_i$ such that $D\cup N(D) \subset c_i$. (This is a simple fact that can be verified from Fig.~\ref{fig:coarse_grained_v2}.) 
Importantly, this implies that the Hamiltonian Eq.~\eqref{eq:local_hamiltonian_candidate} satisfies the LTQO condition [Proposition~\ref{proposition:ltqo}]. 

Thus, our decomposition serves two purposes: (i) ensuring that the local terms Eq.~\eqref{eq:local_hamiltonian_candidate} commute with each other and (ii) the same Hamiltonian satisfies the LTQO condition. By the well-known sufficient condition for the gap stability~\cite{Michalakis2013}, we can thus conclude that Eq.~\eqref{eq:local_hamiltonian_candidate} has a \emph{stable} spectral gap.

\subsection{Conditional independence}
\label{sec:conditional_independence_for_commutation}
In this Section, we seek to prove Eq.~\eqref{eq:conditional_independence_for_commutation}. Note that, when $c_i$ and $c_j$ have nontrivial intersection, there are three different ways in which this can happen. The examples are shown in Fig.~\ref{fig:overlap_regions}, with the convention that the red, green, and the blue regions represent $c_i\setminus c_j$, $c_i \cap c_j$, and $c_j\setminus c_i$, respectively.

\begin{figure}[h]
    \centering
    \begin{subfigure}[b]{0.3\textwidth}
        \centering
        \includegraphics[width=\textwidth]{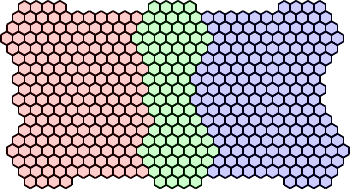}
        \caption{Horizontal overlap}
    \end{subfigure}
    \begin{subfigure}[b]{0.21\textwidth}
        \centering
        \includegraphics[width=\textwidth]{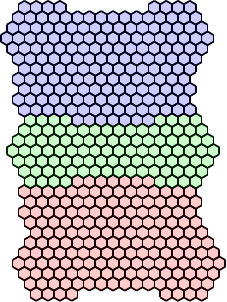}
        \caption{Vertical overlap}
    \end{subfigure}
    \begin{subfigure}[b]{0.3\textwidth}
        \centering
        \includegraphics[width=\textwidth]{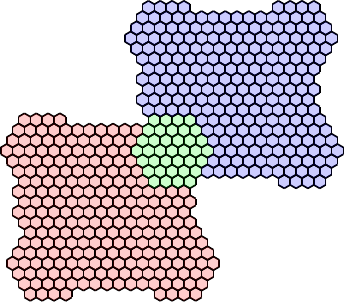}
        \caption{Diagonal overlap}
    \end{subfigure}
    \caption{There are three different ways in which $c_i$ and $c_j$ can overlap. Here red, green, and blue regions represent $c_i\setminus c_j$, $c_i \cap c_j$, and $c_j\setminus c_i$, respectively.}
    \label{fig:overlap_regions}
\end{figure}

In all three cases, we can prove the requisite conditional independence conditions [Eq.~\eqref{eq:conditional_independence_for_commutation}] from an extension of the axiom \textbf{A1} [Section~\ref{sec:extending_a1}]. For instance, suppose we seek to prove 
\begin{equation}
    I(A:C|B)_{\sigma}=0 \quad \text{ for } \quad A, B, C \text{ in Fig.~\ref{fig:cmi_upper_bounds}(a),}
\end{equation}
setting $A= c_i\setminus c_j$, $B= c_i \cap c_j$, and $C=c_j\setminus c_i$. (A note on convention: In Fig.~\ref{fig:cmi_upper_bounds}, the subsystem $A, B, C,$ and $D$ are colored in red, green, blue, and yellow.) We can use 
\begin{equation}
    I(A:C|B)_{\sigma}\leq S(C|B)_{\sigma} + S(C|D)_{\sigma} \quad \text{ for } A, B, C, D \quad \text{in Fig.~\ref{fig:cmi_upper_bounds}(a)}
\end{equation}
The analogous bounds for the other cases uses the choice of $A, B, C,$ and $D$ shown in Fig.~\ref{fig:cmi_upper_bounds}(b) and (c). 

\begin{figure}[h]
    \centering
    \begin{subfigure}[b]{0.3\textwidth}
        \centering
        \includegraphics[width=\textwidth]{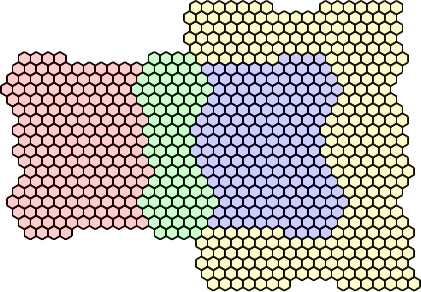}
        \caption{Horizontal overlap}
    \end{subfigure}
    \begin{subfigure}[b]{0.275\textwidth}
        \centering
        \includegraphics[width=\textwidth]{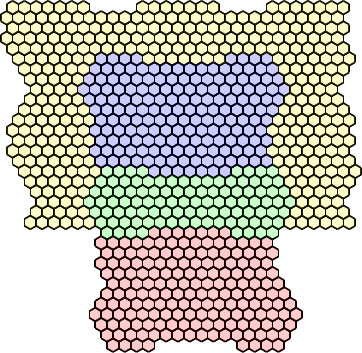}
        \caption{Vertical overlap}
    \end{subfigure}
    \begin{subfigure}[b]{0.3\textwidth}
        \centering
        \includegraphics[width=\textwidth]{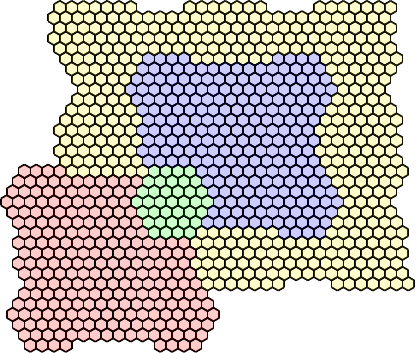}
        \caption{Diagonal overlap}
    \end{subfigure}
    \caption{The choice of subsystems used for upper bounding conditional mutual information. Here red, green, blue, and yellow regions represent $A= c_i \setminus c_j, B=c_i\cap c_j, C= c_j \setminus c_i,$ and $D$.}
    \label{fig:cmi_upper_bounds}
\end{figure}

The key point is that, in all these cases, one can show $S(C|B)_{\sigma} + S(C|D)_{\sigma}=0$; this immediately implies Eq.~\eqref{eq:conditional_independence_for_commutation}. So the natural question is how one can prove $S(C|B)_{\sigma} + S(C|D)_{\sigma}=0$ for the subsystems shown in Fig.~\ref{fig:cmi_upper_bounds}. This follows straightforwardly  through a simple but repetitive applications of the argument presented in Section~\ref{sec:extending_a1}. 

\section{Gapped domain wall}
\label{sec:gapped_domain_wall}
Our discussion so far pertained to the construction of the parent Hamiltonian of the states satisfying the axioms \textbf{A0} and \textbf{A1} everywhere. However, there is a class of states for which \textbf{A1} is violated along a line. In Ref.~\cite{Shi2021}, some of us advocated viewing such a line as a (codimension-$1$) defect, also known as the \emph{gapped domain wall}. Despite its name, it was previously unknown if the state in which a gapped domain wall is present is indeed stabilized by a gapped parent Hamiltonian. This is what we intend to prove in this Section.  

In the language of Ref.~\cite{Shi2021}, a gapped domain wall can be viewed as a one-dimensional line along which the axiom \textbf{A1} may be violated. More precisely, let us consider a partition of a 2D plane into hexagonal cells, consisting of the ones that lie at the upper/lower half with respect to a line [Fig.~\ref{fig:gdw_setup}]. While \textbf{A0} continues to hold in the vicinity of every face, and \textbf{A1} continues to hold in the vicinity of every face that does not touch the domain wall, \textbf{A1} \emph{is} modified for the faces that are in contact with the domain wall [Eq.~\eqref{eq:a1_boundary_valid},~\eqref{eq:a1_boundary_invalid}]. To repeat what we said in Section~\ref{sec:summary}, \textbf{A1} may not hold if the boundaries between $B$ and $D$ are both away from the domain wall and lie on the same side with respect to the domain wall. Otherwise, the \textbf{A1} continues to hold~\cite{Shi2021}. 

We remark that, just like the axioms can be extended in the absence of the gapped domain wall, the axioms for the gapped domain wall can also be extended to larger regions. The argument itself is identical to the one presented in Section~\ref{sec:extensions_of_axioms}, with details presented in Ref.~\cite{Shi2021}. As such, we omit the proof.

The main question we ask at this point is whether one can still show the existence of a gapped parent Hamiltonian with a stable spectral gap. We claim that this is indeed the case. The idea behind the proof is similar to the one presented in Section~\ref{sec:parent_hamiltonian_final}, accompanied by a modified argument for the subsystems that pass through the domain wall. More specifically, we propose to use Eq.~\eqref{eq:hamiltonian_global} as the parent Hamiltonian. The location of the domain wall with respect to the $0$-, $1$-, and $2$-cells are shown in Fig.~\ref{fig:decomposition_gdw}.

\begin{figure}[h]
    \centering
    \includegraphics[width=0.75\textwidth]{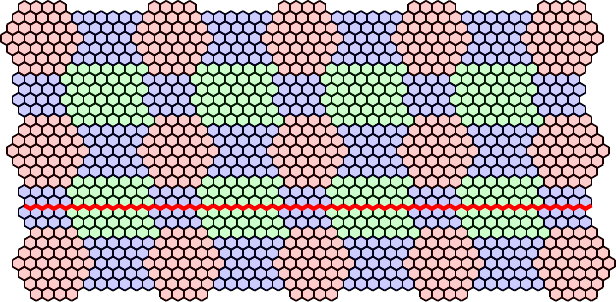}
    \caption{Location of the gapped domain wall (red line) with respect to the $0$-, $1$-, and $2$-cells, corresponding to the red, blue, and green balls.}
    \label{fig:decomposition_gdw}
\end{figure}

Now the task at hand is to prove that the Hamiltonian Eq.~\eqref{eq:hamiltonian_global}, with respect to the decomposition in Fig.~\ref{fig:decomposition_gdw}, has a stable spectral gap. As before, we can conclude so if we can prove the LTQO condition and the local gap condition. 

For proving the LTQO condition, we follow the argument similar to the one in Section~\ref{sec:ltqo}. Consider an arbitrary disk $D$ which, if it intersects with the domain wall, the intersection is a single interval.\footnote{Otherwise, the ensuing argument does not work. For instance, one can consider a disk $D$ whose intersection with the domain wall consists of two intervals. For such a disk, the reduced density matrix of an element of $\text{ker} H_D$ over $D'$ may be different from $\sigma_{D'}$~\cite{Shi2021}.} Also, consider a disk $D'\subset D$ consisting of faces in $D$ that are separated from $\partial D$ by at least some positive distance, such that over any elementary disk associated with a face in $D',$ any reduced density matrix over this elementary disk over $\ker H_D$ is indistinguishable from that of $\sigma$. By the discussion in Section~\ref{sec:ltqo}, such a $D'$ exists. The key question is whether $\rho_{D'}=\sigma_{D'}$ for any $\rho_{D'}$, which is a reduced density matrix of an element of $\ker H_D$ over $D'$. This follows from Lemma~\ref{lemma:a0_consequence}, by using \textbf{A0}. The proof of Proposition~\ref{proposition:ltqo} applies verbatim at this point, establishing the LTQO condition. 

For proving the local gap condition, it should be obvious that the terms that are away from the domain wall trivially commute with each other by the discussion in Section~\ref{sec:commuting_projectors}. What remains is to prove that even the terms that touch the domain wall commute with each other. To prove that, by Proposition~\ref{proposition:commutation_markov}, it suffices to prove $S(B|C)_{\sigma} + S(B|D)_{\sigma}=0$ for the subsystems shown in Fig.~\ref{fig:cmi_upper_bounds_gdw}. Luckily, these are all facts that follow from the extension of the axioms~\cite{Shi2021}. Therefore, the requisite conditional independence condition that ensures the commutativity of the local terms of the Hamiltonian [Proposition~\ref{proposition:commutation_markov}] follows.

\begin{figure}[h]
    \centering
    \begin{subfigure}[b]{0.3\textwidth}
        \centering
        \includegraphics[width=\textwidth]{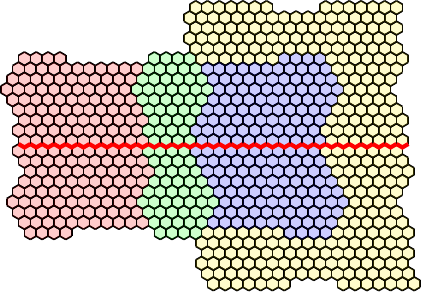}
        \caption{Horizontal overlap}
    \end{subfigure}
    \begin{subfigure}[b]{0.275\textwidth}
        \centering
        \includegraphics[width=\textwidth]{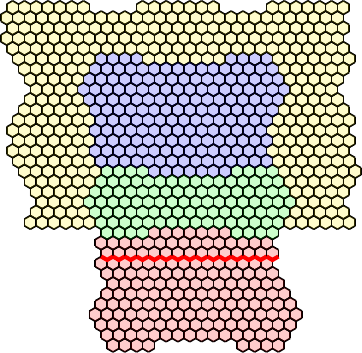}
        \caption{Vertical overlap}
    \end{subfigure}
    \begin{subfigure}[b]{0.3\textwidth}
        \centering
        \includegraphics[width=\textwidth]{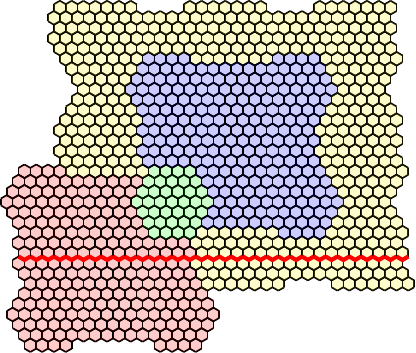}
        \caption{Diagonal overlap}
    \end{subfigure}
    \caption{The choice of subsystems used for upper bounding conditional mutual information in the presence of a gapped domain wall. Here, red, green, blue, and yellow regions represent $A= c_i \setminus c_j, B=c_i\cap c_j, C= c_j \setminus c_i,$ and $D$.}
    \label{fig:cmi_upper_bounds_gdw}
\end{figure}

To summarize, both the LTQO and the local gap condition continue to be satisfied for the Hamiltonian Eq.~\eqref{eq:hamiltonian_global}, even in the presence of gapped domain walls. Thus, we conclude that this Hamiltonian has a stable spectral gap, justifying the name ``gapped domain wall.''

\section{Weight reduction}
\label{sec:weight_reduction}

While the construction of the parent Hamiltonian in Section~\ref{sec:parent_hamiltonian_final} and~\ref{sec:gapped_domain_wall} are local in the sense that every term in the Hamiltonian acts nontrivially on a ball of $O(1)$ size, that constant is rather large. One might wonder if it is possible to reduce each term's \emph{weight}, defined as the number of faces contained in the support of the term.

To that end, we introduce a simple fact that lets us reduce the weight all the way down to three. 
\begin{lemma}
    \label{lemma:projector_decomposition}
    Let $\rho_{ABC}$ be a state defined on a finite dimensional Hilbert space $\mathcal{H}_A \otimes \mathcal{H}_B \otimes \mathcal{H}_C$ that satisfies $I(A:C|B)_{\rho}=0$. Let $P_{AB}$, $P_{BC}$, and $P_{ABC}$ be the projector onto the support of $\rho_{AB}$, $\rho_{BC}$, and $\rho_{ABC}$ respectively. Then
    \begin{equation}\label{eq:PP=P}
        P_{ABC} = P_{AB} P_{BC}.
    \end{equation}
\end{lemma}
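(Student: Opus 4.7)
The plan is to piggyback on the infrastructure already built for Proposition~\ref{proposition:commutation_markov}: invoke the Hayden–Jozsa–Petz–Winter decomposition and read off both sides of \eqref{eq:PP=P} in the block-diagonal basis, so the identity reduces to a routine calculation within each block. Since $P_{AB}$ and $P_{BC}$ already commute by Proposition~\ref{proposition:commutation_markov}, the product $P_{AB}P_{BC}$ is automatically a projector, so the task is just to identify its image with $\operatorname{supp}\rho_{ABC}$.

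Concretely, I would first apply the Hayden et al.~theorem to write $\mathcal{H}_B = \bigoplus_j \mathcal{H}_{b_j^L}\otimes \mathcal{H}_{b_j^R}$ with $\rho_{ABC} = \bigoplus_{j\in J} q_j\, \rho_{Ab_j^L}\otimes \rho_{b_j^R C}$ and $J=\{j:q_j>0\}$. Tracing out $C$ (respectively $A$), I obtain $\rho_{AB} = \bigoplus_{j\in J} q_j\, \rho_{Ab_j^L}\otimes \rho_{b_j^R}$ and $\rho_{BC} = \bigoplus_{j\in J} q_j\, \rho_{b_j^L}\otimes \rho_{b_j^R C}$, from which the three projectors decompose as in \eqref{eq:projectors_decomposition}, together with
\begin{equation}
P_{ABC} \;=\; \sum_{j\in J} P_{Ab_j^L}\otimes P_{b_j^R C}.
\end{equation}

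Next I would multiply $P_{AB}P_{BC}$ using these expressions. Because distinct summands live in orthogonal $j$-blocks, all cross terms vanish, and within the $j$-th block the product equals $(P_{Ab_j^L} P_{b_j^L})\otimes (P_{b_j^R}P_{b_j^R C})$. The key remaining fact is the general inclusion $P_{XY} \le I_X \otimes P_Y$ (if $|v\rangle_Y\in\ker\rho_Y$ then $\langle v|\rho_Y|v\rangle = 0$ forces $\rho_{XY}$ to vanish on $\mathcal{H}_X\otimes|v\rangle$), which combined with Lemma~\ref{lemma:lemma0} gives $P_{Ab_j^L}P_{b_j^L}=P_{Ab_j^L}$ and $P_{b_j^R}P_{b_j^R C}=P_{b_j^R C}$. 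Summing over $j\in J$ yields exactly $P_{ABC}$.

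The only step with any real content is the inclusion $P_{XY}\le I_X\otimes P_Y$, and this is standard; everything else is bookkeeping against the Hayden decomposition. So I do not anticipate a genuine obstacle: the lemma is essentially a corollary of Proposition~\ref{proposition:commutation_markov} together with the same block-diagonal calculation that produced \eqref{eq:projectors_decomposition}.
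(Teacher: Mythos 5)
Your proposal is correct and follows essentially the same route as the paper: both expand $P_{AB}$, $P_{BC}$, and $P_{ABC}$ in the Hayden et al.\ block decomposition of Eq.~\eqref{eq:projectors_decomposition} and conclude via the absorption identities $P_{Ab_j^L}P_{b_j^L}=P_{Ab_j^L}$ and $P_{b_j^R}P_{b_j^RC}=P_{b_j^RC}$. You merely spell out the details (the block form of $P_{ABC}$ and the inclusion $P_{XY}\le I_X\otimes P_Y$) that the paper's one-line proof leaves implicit.
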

\begin{proof}
The proof follows immediately from the fact that $P_{b_j^L} P_{Ab_j^L} = P_{Ab_j^L}$, $P_{b_j^R} P_{b_j^RC} = P_{b_j^RC}$, and Eq.~\eqref{eq:projectors_decomposition}. 
\end{proof}

From Lemma~\ref{lemma:projector_decomposition}, it follows that\footnote{Explicitly,  $Q\equiv 1-P$ is a projector. Eq.~\eqref{eq:PP=P} implies $1-Q_{ABC}=(1-Q_{AB})(1-Q_{BC})=(1-Q_{BC})(1-Q_{AB})$. Thus $Q_{AB} + Q_{BC} = Q_{ABC}+Q_{AB}Q_{BC}\ge Q_{ABC}$. The last inequality needs the fact that $Q_{AB}Q_{BC}$ is another projector, which follows from $Q_{AB}Q_{BC}= Q_{BC} Q_{AB}$.} 
\begin{equation}
    (I-P_{AB}) + (I-P_{BC}) \geq I- P_{ABC}. \label{eq:inequality_projectors}
\end{equation}
Using Eq.~\eqref{eq:inequality_projectors} recursively, we can obtain a Hamiltonian $H'$ such that $H'\geq H$, where $H$ is the Hamiltonian defined in Section~\ref{sec:parent_hamiltonian_final} or in Section~\ref{sec:gapped_domain_wall} and $H'$ consists of terms of weight at most three. 

Importantly, note that $\text{ker}(H) = \text{ker}(H')$. This is because the kernel of the left hand side and the right hand side of Eq.~\eqref{eq:inequality_projectors} are identical due to Lemma~\ref{lemma:projector_decomposition}. Thus $H'$ is also a parent Hamiltonian of the reference state that satisfies the local topological quantum order condition. Repeatedly applying Eq.~\eqref{eq:inequality_projectors}, one can reduce the weight of the local terms all the way down to three. However, one disadvantage of this Hamiltonian is that the local terms may not commute with each other.

\section{Discussion}
\label{sec:discussion}

In this paper, we constructed parent Hamiltonians for the states that satisfy the entanglement bootstrap axioms~\cite{shi2020fusion,Shi2021}. This parent Hamiltonian is a local commuting projector Hamiltonian that satisfies the local topological quantum order condition. Therefore, this Hamiltonian has a stable spectral gap. Thus, our result establishes that the states that obey the entanglement bootstrap axiom define a stable phase of matter.

The main novel technical insight is Proposition~\ref{proposition:commutation_markov}, which relates the conditional independence of the ground state to the commutativity of the projectors onto the supports of different local reduced density matrices. Armed with this fact, constructing a parent Hamiltonian reduces to the problem of finding an appropriate decomposition of the lattice that satisfies a certain combinatorial property [Section~\ref{sec:parent_hamiltonian_final}].

We expect this insight to be useful for a number of applications, which we summarize below. First, it is likely that our construction continues to work in higher dimensions, such as in three spatial dimensions~\cite{Huang2023,Shi2023}. One may also consider parent Hamiltonians for states that host topological excitations. It may be possible to also define a ``refined'' parent Hamiltonian whose excitations label the anyon excitations. We leave such studies for future work. 

We also remark that there is an interesting tension between our work and the formula for the chiral central charge~\cite{Kim2022,Kim2022a}. This formula was derived under \textbf{A1}, modulo an extra physical assumption. However, what we showed was that, \textbf{A1}, together with \textbf{A0}, implies that there ought to be a parent Hamiltonian which is a local commuting projector. Such a system cannot have any edge energy current~\cite{Kane1997,kitaev2006anyons}, and as such, would be in tension with any nonzero value of the chiral central charge. More precisely, our result suggests that the entanglement bootstrap axioms cannot be satisfied \emph{exactly} for ground states of systems that have nonzero chiral central charge, at least if each site is described by a finite-dimensional Hilbert space. Physically, this is because such a Hamiltonian necessarily has a vanishing energy current at the edge at all temperatures, which is possible only if the chiral central charge is zero. 

There is, nonetheless, a lingering question concerning the recently proposed modular commutator formula \cite{Kim2022} for chiral central charge $c_-$: 
\begin{equation}
    J(A,B,C)= \frac{\pi}{3} c_-, \nonumber
\end{equation}
where $A,B,C$ is a partition of a disk into three charts meeting at a point. Do we always have $J(A,B,C)=0$, under exact {\bf A0} and {\bf A1}? We leave it as an open problem.

Our work raises several natural questions, which we leave for future work. First, we note that having a parent Hamiltonian with the local TQO condition is a property invariant under a constant-depth circuit, whereas the axiom \textbf{A1} is not. Therefore, the existence of such a parent Hamiltonian is a strictly weaker assumption that holds for all states in the same phase as a state that satisfies \textbf{A0} and \textbf{A1}.\footnote{More precisely, while the local TQO condition is typically assumed only for square-shaped regions~\cite{Bravyi2010} or (geometric) balls~\cite{Michalakis2013}, we may demand this condition for any region that is \emph{topologically} a ball. Simply demanding local TQO for square-shaped regions or geometric balls would not reproduce Ref.~\cite{shi2020fusion}; such a condition is satisfied even for systems with gapped boundaries/domain walls, wherein the anyon contents are different on each side of the domain wall.} What can you prove from this weaker assumption? In particular, can one reproduce the results in Ref.~\cite{shi2020fusion}?

Second, it will be interesting to understand if our approach for constructing a commuting parent Hamiltonian works even if the local Hilbert space dimension is infinite. In fact, we think there are physical reasons to believe this is impossible, as we explain below. Note that our proof relied on the structure theorem for quantum Markov chains~\cite{Hayden2004}. To the best of our knowledge, an infinite-dimensional analog of this result is not currently known. If such a result can be proven and if one accepts that the entanglement bootstrap axioms can be satisfied even if the chiral central charge is nonzero (by having infinite local Hilbert space dimension), we are led to the conclusion that there can be a commuting parent Hamiltonian, even for systems with nonzero chiral central charge! However, this seems unlikely because the energy current produced by such Hamiltonian must necessarily be zero. On this physical ground, one may expect that the structure theorem for quantum Markov chains breaks down in infinite dimensions. Making this argument more precise or coming up with a rigorous counterexample is left for future work.

\section*{Acknowledgement}
IK acknowledges supports from NSF under award number PHY-2337931. DR acknowledges funding from NTT (Grant AGMT DTD 9/24/20). BS was supported by the Simons Collaboration on Ultra-Quantum Matter, a grant from the Simons Foundation (652264 JM). TCL is supported in part by funds provided by the U.S. Department of Energy (D.O.E.) under the cooperative research agreement DE-SC0009919 and by the Simons Collaboration on Ultra-Quantum Matter, which is a grant from the Simons Foundation (652264 JM).

\appendix

\section{Another explicit parent Hamiltonian}\label{appendix:another-H}

We constructed a commuting parent Hamiltonian Eq.~\eqref{eq:hamiltonian_global} satisfying the local topological quantum order condition. Such a construction rests on the existence of a set $\mathcal{S}$ with a couple of special properties. In Section~\ref{sec:parent_hamiltonian_final}, we explicitly constructed a valid such set $\mathcal{S}$. We emphasize that the specific form of the regions in the set $\mathcal{S}$ is not important. What is important is the following set of properties.

\begin{definition}[Markov cover] \label{def:Markov-Cover}
We say the set of subsystems $\mathcal{S}$ a Markov cover if it satisfies
    \begin{enumerate}
    \item (Markov condition) For  $\forall X, Y \in \mathcal{S}$ such that $X\cap Y \ne \emptyset $, The conditional mutual information of the reference state $I(X\setminus Y, Y\setminus X | X\cap Y)_\sigma =0$.
    \item (cover condition) For every elementary disk $D$, there exists some $X\in\mathcal{S}$ such that $D\cup N(D) \subset X$.  
\end{enumerate}
\end{definition}

As one may expect, there are many alternative choices. In this Appendix, we provide another valid set $\mathcal{S}$ for illustration purposes. Consider a set of red hexagons on the hexagon lattice [Fig.~\ref{fig:space-simple-color}].
\begin{figure}[h]
\centering
\includegraphics[width=10cm]{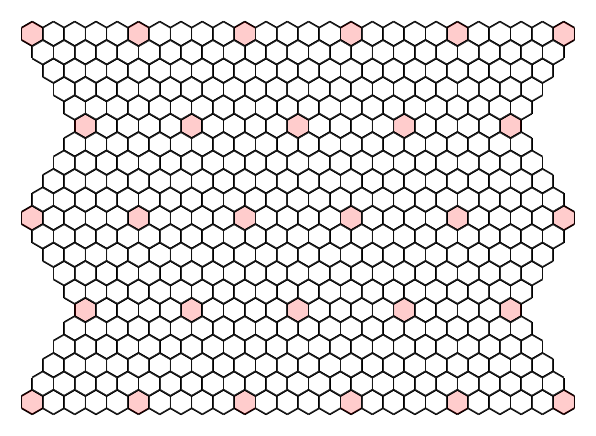}
\caption{Partition used for a valid set $\mathcal{S}$. \label{fig:space-simple-color}}
\end{figure}
Note that the distance between nearby red hexagons is equal to five.

Let us consider the following set: 
\begin{equation}\label{eq:S-candidate-2}
    \mathcal{S}= \{X | X = \textrm{a red face } f_{\rm red} \cup \textrm{faces within distance 5 to $f_{\rm red}$ that are not red}  \}.
\end{equation}  
It immediately follows that every $X \in \mathcal{S}$ is identical in its shape, shown below:
\begin{equation}\label{eq:red-only}
\begin{tikzpicture}[baseline={([yshift=-.55ex]current bounding box.center)}]
\node[] () at (0,0) {\includegraphics[width=3cm]{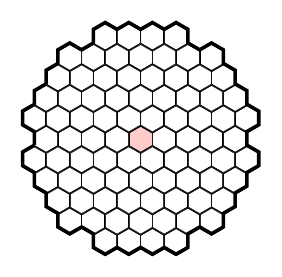}};
\end{tikzpicture}.
\end{equation}

As a matter of fact, the set $\mathcal{S}$ in Eq.~\eqref{eq:S-candidate-2} is a Markov cover. We need to check two properties, which explain below. First, the set $\mathcal{S}$ satisfies the Markov property; see the first item in Definition~\ref{def:Markov-Cover}. The relative positions between two regions $X, Y \in \mathcal{S}$, up to rotations, are the following 
    \begin{equation}\label{eq:red-only-overlap}
\begin{tikzpicture}[baseline={([yshift=-.55ex]current bounding box.center)}]
\node[] () at (0,0) {\includegraphics[width=10cm]{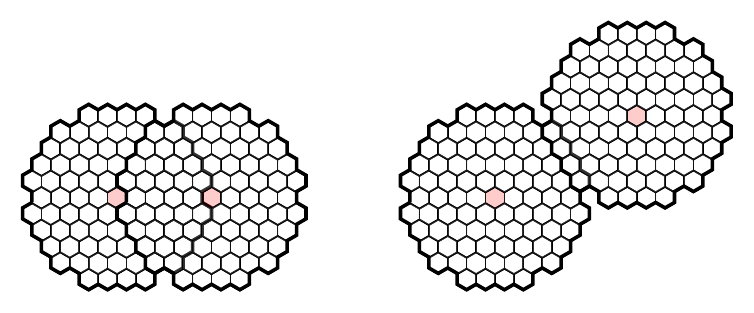}};
\end{tikzpicture}.
\end{equation}
Thus, the Markov property can be checked explicitly, as we have done in Fig.~\ref{fig:overlap_regions} and \ref{fig:cmi_upper_bounds}. Second, we would like to check the cover condition (i.e., the second item of Definition~\ref{def:Markov-Cover}). For any elementary disk $D$ centered at a face $f_D$, we find its nearest red face $f'_{\rm red}$.\footnote{If such a choice is not unique, pick one.} From the construction, we see that the distance between $f_D$ and $f'_{\rm red}$ can only be 0, 1, 2, 3. We end up with a small set of cases to check. In all such cases, indeed $D\cup N(D)$ is contained in the cover $X$ centered at the red face $f'_{\rm red}$.  Thus, the Hamiltonian Eq.~\eqref{eq:hamiltonian_global} is a commuting Hamiltonian that satisfies LTQO if we choose the set $\mathcal{S}$ as Eq.~\eqref{eq:S-candidate-2}.

 \subsection{Reference states with anyon excitations}
 
 Consider a known Hamiltonian with topological order and non-abelian anyons, such as a string-net model.  Say we take our reference state to be not the ground state, but rather some state with non-abelian anyons present.  Their presence can be diagnosed by violations of \textbf{A1}.  Can we still construct a corresponding parent Hamiltonian?  Or more generally, what happens if we consider a reference state with  \textbf{A1} violations in some regions?
 
 We comment on this briefly using the set $\mathcal{S}$ in Eq.~\eqref{eq:S-candidate-2}.
 The nontrivial case is to have a non-Abelian anyon in the center of a disk, i.e. at the red hexagon in \eqref{eq:red-only}. Suppose the quantum dimension of the anyon is $d_a$ ($d_a\ge \sqrt{2}$ for non-Abelian anyons), the constant sub-leading term in Eq.~\eqref{eq:strict-area} is modified by $\log d_a$ if disk $A$ contains an anyon  \cite{Kitaev2006}:
\begin{align} \label{eq:strict-area-da}
S(A)=\alpha |\partial A| - \gamma + \log d_a.
\end{align}
Suppose there are multiple non-Abelian anyons on the plane, each is centered at some $X \in \mathcal{S}$, then the parent Hamiltonian may have multiple ground states even on a topologically trivial manifold. Nonetheless, the steps leading to the stable energy gap (as in Section~\ref{sec:parent_hamiltonian_final}) still go through, provided that the anyons are well separated from each other. In particular, the resulting parent Hamiltonian Eq.~\eqref{eq:hamiltonian_global} is commuting as long as those $X_i,X_j \in \mathcal{S}$ containing anyons at the center are such that $X_i \cap X_j = \emptyset$. This is because, while \textbf{A1} may be violated, the conditional independence condition remains true. The additional contribution of $\log d_a$ to the entanglement entropy cancels out each other in the conditional mutual information.

\bibliographystyle{myhamsplain2}
\bibliography{bib}

\providecommand{\bysame}{\leavevmode\hbox to3em{\hrulefill}\thinspace}
\begin{thebibliography}{10}

\bibitem{Chen2011}
Xie Chen, Zheng-Cheng Gu, and Xiao-Gang Wen, \emph{Classification of gapped
  symmetric phases in one-dimensional spin systems}, Phys. Rev. B \textbf{83}
  (2011), 035107.

\bibitem{Schuch2011}
Norbert Schuch, David P\'erez-Garc\'{\i}a, and Ignacio Cirac, \emph{Classifying
  quantum phases using matrix product states and projected entangled pair
  states}, Phys. Rev. B \textbf{84} (2011), 165139.

\bibitem{Fidkowski2011}
Lukasz Fidkowski and Alexei Kitaev, \emph{Topological phases of fermions in one
  dimension}, Phys. Rev. B \textbf{83} (2011), 075103.

\bibitem{kitaev2006anyons}
Alexei Kitaev, \emph{Anyons in an exactly solved model and beyond}, Annals of
  Physics \textbf{321} (2006), no.~1, 2--111.

\bibitem{Haah2011}
Jeongwan Haah, \emph{Local stabilizer codes in three dimensions without string
  logical operators}, Phys. Rev. A \textbf{83} (2011), 042330.

\bibitem{shi2020fusion}
Bowen Shi, Kohtaro Kato, and Isaac~H Kim, \emph{Fusion rules from
  entanglement}, Annals of Physics \textbf{418} (2020), 168164.

\bibitem{Kitaev2006}
Alexei {Kitaev} and John {Preskill}, \emph{{Topological Entanglement Entropy}},
  Phys. Rev. Lett. \textbf{96} (2006), no.~11, 110404,
  \href{http://arxiv.org/abs/hep-th/0510092}{hep-th/0510092}.

\bibitem{Levin2006}
Michael {Levin} and Xiao-Gang {Wen}, \emph{{Detecting Topological Order in a
  Ground State Wave Function}}, Phys. Rev. Lett. \textbf{96} (2006), no.~11,
  110405, \href{http://arxiv.org/abs/cond-mat/0510613}{cond-mat/0510613}.

\bibitem{Shi2021}
Bowen Shi and Isaac~H. Kim, \emph{Entanglement bootstrap approach for gapped
  domain walls}, Phys. Rev. B \textbf{103} (2021), 115150.

\bibitem{Huang2023}
Jin-Long Huang, John McGreevy, and Bowen Shi, \emph{{Knots and entanglement}},
  SciPost Phys. \textbf{14} (2023), 141.

\bibitem{Shi2023}
Bowen Shi, Jin-Long Huang, and John McGreevy, \emph{Remote detectability from
  entanglement bootstrap i: Kirby's torus trick}, 2023,
  \href{http://arxiv.org/abs/2301.07119}{2301.07119}.

\bibitem{Kim2022}
Isaac~H. Kim, Bowen Shi, Kohtaro Kato, and Victor~V. Albert, \emph{Chiral
  central charge from a single bulk wave function}, Phys. Rev. Lett.
  \textbf{128} (2022), 176402.

\bibitem{Kim2022a}
Isaac~H. Kim, Bowen Shi, Kohtaro Kato, and Victor~V. Albert, \emph{Modular
  commutator in gapped quantum many-body systems}, Phys. Rev. B \textbf{106}
  (2022), 075147.

\bibitem{Lin2023}
Ting-Chun Lin and John McGreevy, \emph{Conformal field theory ground states as
  critical points of an entropy function}, 2023,
  \href{http://arxiv.org/abs/2303.05444}{2303.05444}.

\bibitem{Michalakis2013}
Spyridon Michalakis and Justyna~P. Zwolak, \emph{Stability of frustration-free
  hamiltonians}, Communications in Mathematical Physics \textbf{322} (2013),
  no.~2, 277--302.

\bibitem{Petz1988}
D\'enes Petz, \emph{{Sufficiency of channels over von Neumann algebras}}, The
  Quarterly Journal of Mathematics \textbf{39} (1988), no.~1, 97--108,
  \href{http://arxiv.org/abs/https://academic.oup.com/qjmath/article-pdf/39/1/97/4559225/39-1-97.pdf}{https://academic.oup.com/qjmath/article-pdf/39/1/97/4559225/39-1-97.pdf}.

\bibitem{naaijkens2013quantum}
Pieter Naaijkens, \emph{Quantum spin systems on infinite lattices}, Springer,
  2013.

\bibitem{Bravyi2010}
Sergey Bravyi, Matthew~B. Hastings, and Spyridon Michalakis, \emph{Topological
  quantum order: Stability under local perturbations}, Journal of Mathematical
  Physics \textbf{51} (2010), no.~9, 093512.

\bibitem{Uhlmann1976}
A.~Uhlmann, \emph{The ``transition probability'' in the state space of a
  $*-$algebra}, Reports on Mathematical Physics \textbf{9} (1976), no.~2,
  273--279.

\bibitem{Kim2014informational}
Isaac~H. Kim, \emph{On the informational completeness of local observables},
  2014, \href{http://arxiv.org/abs/1405.0137}{1405.0137}.

\bibitem{Hayden2004}
Patrick Hayden, Richard Jozsa, D{\'e}nes Petz, and Andreas Winter,
  \emph{Structure of states which satisfy strong subadditivity of quantum
  entropy with equality}, Communications in Mathematical Physics \textbf{246}
  (2004), no.~2, 359--374.

\bibitem{Kane1997}
C.~L. Kane and Matthew P.~A. Fisher, \emph{Quantized thermal transport in the
  fractional quantum hall effect}, Phys. Rev. B \textbf{55} (1997),
  15832--15837.

\end{thebibliography}

\end{document}